\newtheorem{theorem}{Theorem}
\newtheorem*{theorem*}{Theorem}
\newtheorem*{conjecture*}{Conjecture}
\newtheorem*{remark*}{Remark}
\newtheorem{lemma}[theorem]{Lemma}
\newtheorem*{lemma*}{Lemma}
\newtheorem*{proposition*}{Proposition}
\newtheorem{corollary}[theorem]{Corollary}
\newtheorem*{corollary*}{Corollary}
\newtheorem*{property*}{Property}
\DeclareMathOperator{\Binom}{Binom}
\DeclareMathOperator{\Span}{span}
\DeclareMathOperator{\OPT}{OPT}
\DeclareMathOperator{\even}{even}
\DeclareMathOperator{\odd}{odd}
\DeclareMathOperator{\argmin}{argmin}
\newcommand{\weight}{\mathrm{w}}
\newcommand{\calE}{\mathcal{E}}
\newcommand{\E}{{\bf E}}
\def\b1{{\bf 1}}
\author{
Patrick Jaillet%
\thanks{
Dept.~of Electrical Engineering and Computer Science, MIT,
Cambridge. E-mail: {\tt jaillet@mit.edu}.
Supported in part by NSF grant 1029603,
and by ONR grants N00014-12-1-0033 and N00014-09-1-0326.
}
\and
Jos\'e A.~Soto%
\thanks{
DIM-CMM University of Chile, Santiago and Technische Universit\"at Berlin, Berlin.
E-mail: {\tt jsoto@dim.uchile.cl}.
Supported in part by N\'ucleo Milenio Informaci\'on y
Coordinaci\'on en Redes ICM/FIC P10-024F.
}
\and
Rico Zenklusen%
\thanks{
ETH Zurich, Zurich, and
Johns Hopkins University, Baltimore.
E-mail: {\tt ricoz@math.ethz.ch}.
Supported by NSF grants CCF-1115849 and CCF-0829878,
and by ONR grants N00014-12-1-0033, N00014-11-1-0053
and N00014-09-1-0326.
}
}
\begin{document}

\title{Advances on Matroid Secretary Problems:\\
Free Order Model and Laminar Case\thanks{A short
version of this paper appeared at
IPCO 2013~\cite{jaillet_2013_advances}.}}

\maketitle

\setcounter{footnote}{0}

\begin{abstract}
The most important open conjecture in the context of
the matroid secretary problem claims the existence of
an $O(1)$-competitive algorithm applicable to any matroid.
Whereas this conjecture remains open, modified forms of
it have been shown to be true, when assuming that the
assignment of weights to the secretaries is not adversarial
but uniformly at
random~\cite{soto_2013_matroid,oveisgharan_2011_variants}.
However, so far, no variant of the matroid secretary
problem with adversarial weight assignment is known
that admits an $O(1)$-competitive algorithm.
We address this point by presenting a
$4$-competitive procedure for the \emph{free order model},
a model suggested shortly after the introduction
of the matroid secretary problem, and for which
no $O(1)$-competitive algorithm was known so far.
The free order model is a relaxed version of the original
matroid secretary problem, with the only difference 
that one can choose the order in which secretaries are
interviewed.

Furthermore, we consider the classical matroid secretary problem
for the special case of laminar matroids.
Only recently, an
$O(1)$-competitive algorithm has been found for this case,
using a clever but rather involved method
and analysis~\cite{im_2011_secretary} that leads
to a competitive ratio of $16000/3$.
This is arguably one of the most involved special cases of the
matroid secretary problem for which an $O(1)$-competitive
algorithm is known.
We present a considerably simpler and stronger
$3\sqrt{3}e\approx 14.12$-competitive procedure,
based on reducing the problem to a matroid secretary problem
on a partition matroid.
Furthermore, our procedure is order-oblivious, which,
as shown in~\cite{azar_2014_prophet}, allows for
transforming it into a
$3\sqrt{3}e$-competitive algorithm for
single-sample prophet inequalities.

\end{abstract}

\section{Introduction}

The secretary problem is a classical online selection problem
of unclear origin~\cite{dynkin_1963_optimum,ferguson_1989_who,%
gardner_1960_mathematical,gardner_1960b_mathematical,%
lindley_1961_dynamic}.
In its original form, the task is to choose the best
out of $n$ secretaries, also called \emph{elements}
or \emph{items}. Secretaries arrive (or are interviewed)
one by one in random order. As soon as a secretary arrives,
she can be ranked against all previously seen secretaries.
Then, before the next one arrives, one has to decide
irrevocably whether to choose the current secretary or not.
There is a classical algorithm that selects the best secretary
with probability $1/e$~\cite{dynkin_1963_optimum},
and this is known to be asymptotically optimal.
In its initial form, the secretary problem was
essentially
a stopping time problem, and not surprisingly, it mainly
attracted the interest of probabilists.

Recently, secretary problems enjoyed a revival, and
various generalizations
have been studied. These developments are strongly motivated by a
close connection to online
mechanism design, where a good is sold to agents arriving
online~\cite{kleinberg_2005_multiple-choice,babaioff_2007_matroids}.
Here, the agents correspond to the secretaries and they reveal
prices that they are willing to pay in exchange for goods.
This leads to secretary problems where more than
one secretary can be chosen.
The most canonical generalization asks to hire $k$
out of $n$ secretaries, each revealing a non-negative weight
upon arrival, and the goal is to hire a maximum weight subset
of $k$ secretaries. This interesting variant was introduced
and studied by Kleinberg~\cite{kleinberg_2005_multiple-choice},
 who presented a $(1-O(1/\sqrt{k}))$-competitive algorithm
for this setting.
However, in many applications, additional constraints
have to be imposed on the elements that can be chosen.
A very general class of constrained secretary problems,
where the chosen elements have to form an independent set of
a given matroid $M=(N,\mathcal{I})$, was
introduced by Babaioff, Immorlica and
Kleinberg~\cite{babaioff_2007_matroids}%
\footnote{A matroid $M=(N, \mathcal{I})$ consists of a finite
set $N$, called the \emph{ground set}, and a non-empty family
$\mathcal{I}\subseteq 2^N$ of subsets of $N$,
called \emph{independent sets}, satisfying:
(i) $I\in \mathcal{I}, J\subseteq I \Rightarrow J\in\mathcal{I}$,
and (ii) $I,J\in \mathcal{I}, |I|>|J|
\Rightarrow \exists f\in I\setminus J$ with $J\cup \{f\}\in \mathcal{I}$.
For more information on matroids we refer the reader
to~\cite{schrijver_2003_combinatorial}.}.
This setting, now generally termed \emph{matroid secretary problem},
covers at the same time many interesting cases and has a
rich structure that can be exploited to design 
algorithms with strong competitive ratios.

To give a concrete example of a matroid secretary problem,
and to motivate some of our results, consider the following
connection problem.
Given is an undirected graph $G=(V,E)$, representing
a communication network, with non-negative
edge-capacities $c:E\rightarrow \mathbb{Z}_{\geq 0}$ and a server
$r\in V$.
Clients, which are the equivalent of candidates
in the secretary problem, reside at vertices of the graph
and are interested in being connected to the server $r$ via
a unit-capacity path.
The number of clients and their locations are known.
Each client has a price that she is willing to pay to
connect to the server. These prices are unknown and no
assumptions are made on them except for being non-negative.
Clients then reveal themselves one by one in random order,
announcing their price.
Whenever a client reveals herself,
the network operator has to decide irrevocably before the
next client appears whether to serve this client
and receive the announced price.
The goal is to choose a maximum weight subset of clients
that can be served simultaneously without exceeding the
given capacities $c$.
It is well-known that the constraints imposed by the limited
capacity on the clients that can be chosen is a special type
of matroid constraint, namely a gammoid
constraint~\cite{schrijver_2003_combinatorial}.

For the classical matroid secretary problem, as discussed
above, the currently asymptotically best competitive algorithm
is an $O(\sqrt{\log{\rho}})$-competitive method by
Chakraborty and Lachish~\cite{chakraborty_2012_improved},
where
$\rho$ is the rank of the matroid%
%
%
.
This improved
on an earlier $O(\log{\rho})$-competitive algorithm of
Babaioff, Immorlica
and Kleinberg~\cite{babaioff_2007_matroids}.
Babaioff et al.~\cite{babaioff_2007_matroids} asked about
the existence of an $O(1)$-competitive algorithm
for the matroid secretary problem. This question remains
open and is arguably the currently most important open question
regarding the matroid secretary problem.

Motivated by this conjecture, many interesting advances have
been made to obtain $O(1)$-competitive methods, either
for special cases of the matroid secretary problem or variants
thereof.
In particular, $O(1)$-competitive algorithms have been found
for graphic matroids~\cite{babaioff_2007_matroids,korula_2009_algorithms}
(currently best competitive ratio: $2e$),
transversal matroids~\cite{babaioff_2007_matroids,%
dimitrov_2008_competitive,korula_2009_algorithms}
($8$-competitive),
co-graphic matroids~\cite{soto_2013_matroid} ($3e$-competitive),
linear matroids with at most $k$ non-zero entries
per column~\cite{soto_2013_matroid}
($k e$-competitive),
laminar matroids~\cite{im_2011_secretary}
($16000/3$-competitive),
regular matroids ($9e$-competitive)~\cite{dinitz_2013_matroid}, and
some types of decomposable matroids, including
max-flow min-cut matroids~\cite{dinitz_2013_matroid}
($9e$-competitive).
For most of the above special cases, strong competitive algorithms
have been found, typically based on very elegant techniques.
However for the laminar matroid, only a considerably
higher competitive ratio is known due to
Im and Wang~\cite{im_2011_secretary},
using a very clever but quite involved method and analysis.

Furthermore, variants of the matroid secretary problem have
been investigated that assume random instead of adversarial
assignment of the weights, and for which $O(1)$-competitive
algorithms can be obtained without any restriction on the
underlying matroid.
Recall that the classical matroid secretary problem
does not make any assumptions on how
weights are assigned to the
elements, which means that we have to assume a worst-case,
i.e., \emph{adversarial}, weight assignment.
However, the order in which the elements reveal themselves
is assumed to be random.
Soto~\cite{soto_2013_matroid} considered the variant where
not only the arrival order of the elements
is assumed to be uniformly random but also the assignment
of the weights to the elements, and presented
a $5.7187$-competitive algorithm for this case.
More precisely, in this model, the weights can still
be chosen by an adversary, but are then assigned uniformly
at random to the elements of the matroid.
Building upon earlier work of Soto~\cite{soto_2011_matroid},
Vondr\'ak and Oveis
Gharan~\cite{oveisgharan_2011_variants} showed that a
$40e/(e-1)$-competitive algorithm can even be obtained when the
arrival order of the elements is adversarial and the
assignment of weights remains 
uniformly at random. This was later improved to
a $16e/(e-1)$-competitive algorithm by Soto~\cite{soto_2013_matroid}.
Hence, this model is somehow the opposite of the classical
matroid secretary problem, where assignment is adversarial
and arrival order is random.

However, so far, no progress has been made in variants
with adversarial assignment.
One such variant, suggested shortly after
the introduction of the matroid secretary
problem~\cite{kleinberg_2012_personal},
assumes that the appearance order of elements 
can be chosen by the algorithm. More precisely,
in this model, which we call the \emph{free order model},
whenever a next element has to reveal itself,
the algorithm can choose the element to be
revealed.
For example, in the above network connection problem, one could
decide at each step which is the next client to reveal
its price, by using for this decision the network structure
and the elements observed so far.
A main further complication when dealing with
adversarial assignments---as in the free order
model---contrary to random assignment, is that the knowledge
of the initial structure of the matroid seems to
be of little help. This is due to the fact that
an adversary can assign a weight of zero to most
elements of the matroid, and only give a non-negative
weight to a selected subset $A\subseteq N$ of elements.
Hence, the problem essentially reduces to the restriction
$M\vert_A$ of the matroid $M$ over the elements $A$.
However, the structure of $M\vert_A$ is essentially
impossible to guess from $M$.
This is in stark contrast to models with
random assignment, e.g., in the model considered
by Soto, the mentioned $2e^2/(e-1)$-competitive
algorithm
exploits the given structure
of the matroid $M$, by partitioning $N$ and solving
a standard single secretary problem on each part of
the partition.
Different approaches are needed for
adversarial weight assignments.

In this paper we are interested in the following two questions.
First, is there an $O(1)$-competitive algorithm
for the free order model?
Second, can we get a better understanding
of the laminar case of the classical secretary problem,
with the goal to find considerably stronger and simpler
procedures?

As is common in this context, we use competitive analysis
to judge the quality of algorithms.
More precisely,  an algorithm is \emph{$c$-competitive}
if it returns a (random) solution whose expected value is at
least $\frac{1}{c} \OPT$, where $\OPT$ is the value of
an \emph{offline} optimum solution, i.e., a maximum weight
independent set.
Hence, the goal is to find $c$-competitive algorithms with
$c\geq 1$ being as close as possible to $1$.

\subsubsection*{Our results and techniques}
We present a $4$-competitive algorithm for the free
order model, thus obtaining the first $O(1)$-competitive
algorithm for a variant of the matroid secretary problem with
adversarial weight assignment, without any restriction on
the underlying matroid. 
This algorithm is in particular applicable 
to the previously mentioned network
connection problem, when the order, in which the network
operator negotiates with the clients, can be chosen.

On a high level, our algorithm follows a quite intuitive
idea, which,
interestingly, does not work in the traditional
matroid secretary problem.
In a first phase,
we draw each element with probability~$0.5$ to
obtain a set $A\subseteq N$, without
selecting any element of $A$.
Let $\OPT_A$ be the best offline solution in $A$.
We call an element $f\in N\setminus A$ \emph{good},
if it
can be used to improve $\OPT_A$, in the sense that
either $\OPT_A \cup \{f\}$ is independent
or there is an element $g\in \OPT_A$ such that
$(\OPT_A \setminus \{g\}) \cup \{f\}$ is independent
and has a higher value than $\OPT_A$.
In the second phase, we go through
the remaining elements $N\setminus A$,
drawing element by element in a well-chosen way
to be specified soon.
We accept an element $f\in N\setminus A$ if it is good and does not destroy
independence when added to the elements accepted so far.
Our approach fails if elements are drawn randomly in
the second phase. The main problem when drawing randomly,
is that we may accept good elements of relatively low value that
may later \emph{block} some high-valued good elements, in the
sense that they cannot be added anymore
without destroying independence of the selected
elements.
To overcome this problem, we determine after the first
phase a specific order of how elements will be drawn
in the second phase.
The idea is to first draw elements of $N\setminus A$ that are
in the span of elements of $A$ of high weight.
More precisely, let $A=\{a_1,\dots, a_m\}$ be the numbering of
the elements of $A$ according to decreasing weights.
In the second phase
we start by drawing elements of $(N\setminus A)\cap \Span(\{a_1\})$,
then $(N\setminus A)\cap \Span(\{a_1,a_2\})$, and so
on\footnote{We recall that $\Span(S)$ for $S\subseteq N$
is the unique maximal set $U\supseteq S$ with the same rank as $S$.}.
One particular situation, where the above ordering becomes 
very intuitive, is if there is a set $S\subseteq N$
with a high density of high-valued elements.
In this case it is likely that many elements of $S$ are part of
$A$. Hence, high-valued elements of $A$
span further high-valued elements in $S$.
Thus, by the above order, we are likely to
draw high-valued elements of $S$ early, before they can
be blocked by the inclusion of lower-valued elements.

Similar to previous secretary algorithms,
we show that
our algorithm is $O(1)$-competitive by proving that
each element $f\in \OPT$ of the global offline optimum $\OPT$
will be chosen with probability at least $1/4$.
However, the way we prove this is based on a novel approach.
Broadly speaking, we show that
an element $f \in \OPT$ gets selected if
additionally to $f\not\in A$, the following
property holds: either
$f\not\in \Span((N\setminus A)\setminus \{f\})$,
or the maximum value $\beta \geq 0$ such that
$f$ is spanned by elements in $(N\setminus A)\setminus\{f\}$
of weight $\geq \beta$ is smaller than the
maximum value $\alpha \geq 0$
such that $f$ is spanned by elements in $N\cap A$ of weight
$\geq \alpha$.
Exploiting that the
distributions of $A$ and $N\setminus A$ are identical,
we show that the above conditions happens with
probability at least $1/4$.

In an earlier short version of this
paper~\cite{jaillet_2013_advances}, we only proved that our
algorithm is $9$-competitive. Our proof was later
refined and simplified by Azar, Kleinberg and
Weinberg~\cite{azar_2014_prophet} to show $4$-competitiveness
of our procedure.
Due to this recent development, we present here the refined
analysis of~\cite{azar_2014_prophet}.
We are thankful to the authors of~\cite{azar_2014_prophet}
for their agreement to include this analysis in the present
paper.

Furthermore, we present a new approach to deal with laminar
matroids in the classical matroid secretary model.
Our technique leads to a $3\sqrt{3} e \approx 14.12$-competitive
procedure,
thus considerably improving on
the $16000/3\approx 5333$-competitive algorithm 
of Im and Wang~\cite{im_2011_secretary}.
Our main contribution here is to present a simple way to transform
the matroid secretary problem on a laminar matroid $M$ to
one
on a unitary partition matroid%
\footnote{A \emph{unitary partition matroid} is a partition matroid
where at most one element can be chosen in each set of the partition.}
$M_P$ by losing only a
small constant factor of $3\sqrt{3}\approx 5.2$.
The secretary problem on $M_P$ can then
simply be solved by applying the classical $e$-competitive algorithm
for the standard secretary problem to each partition of $M_P$.
We first observe a constant fraction of all elements, on the basis
of which a partition matroid $M_P$ on the remaining elements is
then constructed.
To assure feasibility, $M_P$ is defined such that
each independent set of $M_P$ is also an independent set of $M$.
To best convey the main ideas of our procedure, 
we first present
a very simple method to obtain a
weaker $27e/2\approx 36.7$-competitive algorithm,
which already improves considerably on
the $16000/3$-competitive algorithm of Im and Wang.
The $3\sqrt{3} e$-competitive algorithm is then
obtained through a strengthening
of this approach by using a stronger partition matroid $M_P$ and
a tighter analysis.

A further advantage of our procedure for laminar matroids is the
fact that it leads to $O(1)$-competitive algorithms in the context
of single-sample matroid prophet inequalities, which in turn implies
strong algorithms for order-oblivious posted
pricing mechanisms, as shown by Azar,
Kleinberg and Weinberg~\cite{azar_2014_prophet}.
More precisely, prophet inequalities are a setting that is closely related
to the matroid secretary problem. The key difference is that the weight
of each element comes from a distribution that depends on the element,
and depending on the setting may or may not be known in advance.
In single-sample prophet inequalities, one only knows a single sample
from each distribution, and the order in which the elements arrive is
adversarial, which is another key difference to the classical matroid
secretary problem.
It was shown in~\cite{azar_2014_prophet} that an $\alpha$-competitive
matroid secretary algorithm can be transformed into an
$\alpha$-competitive algorithm for single-sample prophet inequalities,
if the secretary algorithm is \emph{order-oblivious}.
Loosely speaking, an order-oblivious procedure is
one that consists of two phases, where in a first phase a subset
of the elements is observed without choosing any element, and
furthermore, the competitive ratio does not dependent on the
order in which elements appear in the second phase. Hence, the
algorithm does not need the random order assumption during the second
phase.
Contrary to the previous $O(1)$-competitive laminar secretary
algorithm~\cite{im_2011_secretary}, and also a subsequently
introduced $O(1)$-competitive algorithm for this
case~\cite{ma_2013_simulatedSTACS}, our algorithm is order-oblivious.
We refer the reader
to~\cite{azar_2014_prophet} for more information on order-oblivious
algorithms and single-sample prophet inequalities.
Furthermore,~\cite{azar_2014_prophet} also discusses the implications
of our algorithm for laminar matroids in this context.

We remark that the algorithms we present do
not need to observe the exact
weights of the items when they reveal themselves, but only need
to be able to \emph{compare} the weights of elements observed so far.
This is a common feature of many matroid secretary algorithms
and matroid algorithms more generally.

To simplify the exposition, we assume that all weights
are distinct, i.e., they induce a linear order on the elements.
This implies in particular, that there is a unique
maximum weight independent set.
The general case with possibly equal weights
easily reduces to this case
by breaking ties arbitrarily between elements of equal weight,
to obtain a linear order.

\subsubsection*{Related work}
Recently,
matroid secretary problems
with submodular objective functions have been considered.
For this setting, $O(1)$-competitive procedures have been found
for knapsack constraints, uniform matroids, and,
if the submodular objective is furthermore monotone,
for partition matroids, and more generally for
intersections of laminar matroids,
and transversal matroids (see~\cite{bateni_2010_submodular,
feldman_2011_improved, gupta_2010_constrained,
ma_2013_simulatedSTACS}).

Additionally, variations of the matroid secretary problem have
been studied with restricted knowledge on the underlying
matroid type. This includes the case where
no prior knowledge of the underlying
matroid is assumed except for the size of the ground set.
Or even more extremely, the case
without even knowing the size of the ground set.
For more information on such variations
we refer to the excellent overview
in~\cite{oveisgharan_2011_variants}.

\subsubsection*{Subsequent results}
We would like to highlight that very recently,
after a previous version~\cite{jaillet_2012_advances}
of this article,
Ma, Tang and Wang~\cite{ma_2012_simulated, ma_2013_simulatedSTACS} further
improved the competitive ratio for
the secretary problem on laminar matroids by presenting
a $9.6$-competitive algorithm.
They use an interesting and natural algorithmic idea, including
elements only if they are part of the offline optimum
of all elements seen so far.
The description of their algorithm is nice and elegant,
however, its analysis is somewhat involved.
Unfortunately, their algorithm
is not order-oblivious and therefore cannot be used in the context
of single-sample prophet inequalities.

\subsubsection*{Organization of the paper}
Our $4$-competitive algorithm for the free order
model is presented in Section~\ref{sec:freeOrder}.
Section~\ref{sec:laminar} discusses our algorithms for
the classical matroid secretary problem.
We start by presenting in Section~\ref{subsec:simpleLam} 
our simple $27e/2$-competitive method, and then
show in Section~\ref{subsec:betterLam} how to strengthen
the algorithm and its analysis
to obtain the claimed $3\sqrt{3}e$-competitiveness.

\section{A $4$-competitive algorithm for the
free order model}\label{sec:freeOrder}

To simplify the writing we use ``$+$'' and ``$-$''
for the addition and subtraction of single elements
from a set, i.e., $S+f-g = (S\cup\{f\})\setminus \{g\}$.
Furthermore, for $k\in \mathbb{Z}_{\geq 1}$ we use
the shorthand $[k]:=\{1,\dots, k\}$.
Algorithm~\ref{alg:fo} describes
our $4$-competitive algorithm
for the free order model.

\begin{algorithm}
\caption{A $4$-competitive algorithm
for the free order model.}
\label{alg:fo}
\begin{enumerate}
\item\label{algitem:foSamplingStep} \textbf{Draw}
each element with probability~$0.5$ to obtain
$A\subseteq N$, without selecting any element of $A$.
We number the elements of $A=\{a_1,\dots, a_m\}$ in
decreasing order of weights.
Define $A_i=\{a_1,\dots, a_i\}$, with $A_0=\emptyset$.\\
\textbf{Initialize:} $I\leftarrow \emptyset$.

\item\label{algitem:foSelectionStep}
\textbf{For} $i=1$ to $m$:\\
  \hspace*{1em} \textbf{draw one by one} (in any order) all
  elements $f\in (\Span(A_{i})\setminus \Span(A_{i-1}))\setminus A$:\\
    \hspace*{2em} \textbf{if} $I+f\in \mathcal{I}$ and
    $\weight(f)>\weight(a_i)$, \textbf{then} $I=I+f$.\\
\textbf{For} all remaining elements $f\in N\setminus \Span(A)$
(drawn in any order):\\
  \hspace*{1em} \textbf{if} $I+f\in \mathcal{I}$,
    \textbf{then} $I=I+f$.\\
  \textbf{Return} $I$
\end{enumerate}
\end{algorithm}

To analyze Algorithm~\ref{alg:fo}, we introduce some additional
notation. Let $\{e_1,\dots, e_n\}=N$ be the numbering of the
elements of the ground set satisfying
$\weight(e_1) > \dots > \weight(e_n)$. Furthermore, for
each $j\in [n]$, we define $N_j := \{e_1,\dots, e_j\}$.

As mentioned previously, a \emph{good} element
$f\in N\setminus A$ is an element that allows for improving
the maximum weight independent set in $A$,
i.e., the unique maximum weight independent set in
$A+f$ includes $f$.
An element $f$ being good thus means that it gets selected
when applying the greedy algorithm to $A+f$. Hence, $f$ is good
if and only if $f\not\in \Span(\{a\in A\mid \weight(a) > \weight(f)\})$,
which can be rephrased as $f$ is good if either $f\not\in \Span(A)$,
or if there is an index $i\in [m]$ such that
$f\in \Span(A_i)\setminus \Span(A_{i-1})$
and $\weight(f)>\weight(a_i)$.
Hence, our algorithm indeed only accepts good elements.
Furthermore, whenever any element $f$ of the offline
optimum $\OPT$ is considered in some
iteration $i\in [m]$ in the
first for-loop of step~\ref{algitem:foSelectionStep}---i.e.,
$f\in (\Span(A_i)\setminus \Span(A_{i-1}))\setminus A$---then
we always have $\weight(f) > \weight(a_i)$.
Hence, an element $f\in \OPT$ gets selected 
by Algorithm~\ref{alg:fo}
if and only if $I+f\in \mathcal{I}$, where $I$ is the set
of already selected elements at the time when $f$
is considered.

To show that Algorithm~\ref{alg:fo} is
$4$-competitive,
we show that each element $f\in \OPT$ 
will be contained in the
set $I$ returned by the algorithm
with probability at least $1/4$.
Hence, let $f\in \OPT$, and define
\begin{align*}
j_1 &:= \argmin\{j\in [n] \mid f\in \Span((N_j \cap A)-f)\}, \\
j_2 &:= \argmin\{j\in [n] \mid f\in \Span((N_j \setminus A)-f)\}, 
\end{align*}
where, if $f\not\in \Span(A-f)$ we set $j_1=\infty$, and
and $f\not\in\Span((N\setminus A)-f)$ then
$j_2=\infty$.
Notice that both $j_1$ and $j_2$ are random variables that depend
on the random set $A$. The following lemma provides a simple
property under which $f$ gets selected.
\begin{lemma}\label{lem:okIfJ1LeqJ2}
If $j_1 \leq j_2$ and $f\not\in A$
then $f$ gets selected by Algorithm~\ref{alg:fo}.
\end{lemma}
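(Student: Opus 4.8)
The plan is to argue directly that, at the moment Algorithm~\ref{alg:fo} examines $f$, the set $I$ of already-accepted elements satisfies $f\notin\Span(I)$; since $I\in\mathcal{I}$ this gives $I+f\in\mathcal{I}$, and then the two facts already recorded above --- that an $f\in\OPT$ examined in iteration $i$ of the first for-loop has $\weight(f)>\weight(a_i)$, and that such an $f$ is accepted if and only if $I+f\in\mathcal{I}$ --- immediately yield that $f$ is selected. I would distinguish the two cases $f\notin\Span(A)$ and $f\in\Span(A)$.

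If $f\notin\Span(A)$, then $j_1=\infty$, so the hypothesis $j_1\le j_2$ forces $j_2=\infty$, i.e.\ $f\notin\Span((N\setminus A)-f)$. In this case $f\in N\setminus\Span(A)$, so $f$ is examined in the second for-loop; at that point $I$ consists only of elements of $N\setminus A$ (the algorithm never picks from $A$) and does not contain $f$, so $I\subseteq(N\setminus A)-f$ and hence $f\notin\Span((N\setminus A)-f)\supseteq\Span(I)$, so $f$ is accepted.

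If $f\in\Span(A)$, let $i^*$ be the index with $a_{i^*}=e_{j_1}$. Using the minimality of $j_1$ and $f\notin A$, one checks that $e_{j_1}\in A$, that $N_{j_1}\cap A=A_{i^*}$ and $N_{j_1-1}\cap A=A_{i^*-1}$, and hence that $f\in\Span(A_{i^*})\setminus\Span(A_{i^*-1})$; thus $f$ is examined in iteration $i^*$ of the first for-loop, where in particular $\weight(f)>\weight(a_{i^*})=\weight(e_{j_1})$. Suppose, for contradiction, that $f$ is not selected; then $f\in\Span(I)$ for the current partial solution $I$. The crucial observation is that every element $g\in I$ was accepted in some iteration $i'\le i^*$, and acceptance in iteration $i'$ requires $\weight(g)>\weight(a_{i'})\ge\weight(a_{i^*})=\weight(e_{j_1})$; together with $g\in N\setminus A$ this shows $I\subseteq N_{j_1-1}\setminus A$. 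Since also $\weight(f)>\weight(e_{j_1})$, i.e.\ $f\in N_{j_1-1}$, and $f\notin I$, we get $I\subseteq(N_{j_1-1}\setminus A)-f$, whence $f\in\Span(I)\subseteq\Span((N_{j_1-1}\setminus A)-f)$. By definition of $j_2$ this gives $j_2\le j_1-1<j_1$, contradicting $j_1\le j_2$.

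The step I expect to be the main obstacle --- and the one I would be most careful about --- is precisely this weight bound on the elements of $I$ in the second case: one must use both that acceptance in iteration $i'$ enforces the threshold $\weight(a_{i'})$ and that the $a_i$ are listed by decreasing weight, so that everything in $I$ when $f$ is examined has weight exceeding $\weight(a_{i^*})=\weight(e_{j_1})$. This is exactly what turns the failure $f\in\Span(I)$ into a witness that $f$ is spanned by elements of $N\setminus A$ all heavier than $e_{j_1}$, i.e.\ into the inequality $j_2<j_1$. The remaining steps are routine manipulations of the definitions of $j_1$, $j_2$, $N_j$, $A_i$ and the standard properties of $\Span$.
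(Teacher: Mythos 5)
Your proof is correct and follows essentially the same route as the paper's: identify the iteration in which $f$ is examined via $a_{i^*}=e_{j_1}$, observe that the set $I$ accepted so far satisfies $I\subseteq N_{j_1-1}\setminus A$, and use the definition of $j_2$ together with $j_1\le j_2$ to rule out $f\in\Span(I)$. The only cosmetic difference is that you case-split on $j_1$ instead of $j_2$ and phrase the spanned case as a contradiction, whereas the paper writes the same span-containment chain directly.
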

\begin{proof}
We first handle the case $j_2=\infty$.
Consider the moment when $f$ is considered in the
second step of Algorithm~\ref{algitem:foSelectionStep},
either in the first or second for-loop, and let $I$
be the elements selected so far by the algorithm.
As discussed above, since $f\in\OPT$, we only have
to show $I+f\in \mathcal{I}$ for $f$ to be selected,
which holds because
\begin{equation*}
f \overset{j_2=\infty}{\not\in}
  \Span((N\setminus A)-f)
  \overset{I\subseteq (N\setminus A)-f}{\supseteq}
  \Span(I),
\end{equation*}
and hence, $f\not\in \Span(I)$.

Now assume $j_2 < \infty$, and therefore also
$j_1 < \infty$ since $j_1\leq j_2$.
Consider the moment when $f$ is considered in the
second step of Algorithm~\ref{alg:fo}, and let $I$
be the set of elements selected so far by the algorithm.
Notice that since $j_1 < \infty$, we have $f\in \Span(A-f)$,
and therefore, $f$ is considered at some iteration $i\in [m]$
during the first for-loop
of step~\ref{algitem:foSelectionStep} of the algorithm.
Furthermore, $i$ is the smallest index in $[m]$ such
that $f\in \Span(A_i)$, and thus, $a_i = e_{j_1}$.
Hence, $I$ only contains elements of weight
strictly larger than $\weight(e_{j_1})$, i.e.,
$I\subseteq N_{j_1-1}$. Furthermore,
since $I\subseteq N\setminus A$, this implies
$I\subseteq N_{j_1-1}\setminus A$.
Again we have $I+f\in \mathcal{I}$ since 
\begin{equation*}
f \overset{\text{def.~of $j_2$}}{\not\in}
 \Span((N_{j_2-1}\setminus A)-f)
 \overset{j_1\leq j_2}{\supseteq} \Span((N_{j_1-1} \setminus A)-f)
 \overset{I\subseteq (N_{j_1-1}\setminus A)-f}{\supseteq} \Span(I).
\end{equation*}

\end{proof}

Leveraging Lemma~\ref{lem:okIfJ1LeqJ2}, we can now prove
the correctness of the algorithm by showing that
$j_1\leq j_2$ with probability at least $0.5$.

\begin{theorem}
Algorithm~\ref{alg:fo} selects each element 
$f\in \OPT$ with probability at least $1/4$,
and is therefore $4$-competitive.
\end{theorem}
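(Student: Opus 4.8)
The plan is to combine Lemma~\ref{lem:okIfJ1LeqJ2} with a symmetry argument exploiting the fact that each element of $N$ lands in $A$ or in $N\setminus A$ independently with probability $1/2$. First I would fix $f\in\OPT$ and condition on the membership status of all elements \emph{other than} $f$; call this conditioning $\mathcal{C}$. Under $\mathcal{C}$, the partition of $N\setminus\{f\}$ into its $A$-part and its $(N\setminus A)$-part is completely determined, so the sets $\{a\in A : a\ne f\}$ and $\{b\in N\setminus A : b\ne f\}$ are fixed, and hence the two quantities
\[
\tilde{\jmath}_A := \min\{j\in[n] : f\in\Span((N_j\cap (A\setminus\{f\}))\}
\quad\text{and}\quad
\tilde{\jmath}_B := \min\{j\in[n] : f\in\Span((N_j\cap ((N\setminus A)\setminus\{f\})))\}
\]
are fixed numbers in $[n]\cup\{\infty\}$ as well. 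The key observation is that $j_1$ and $j_2$ from the theorem statement coincide with $\tilde{\jmath}_A$ and $\tilde{\jmath}_B$ respectively, \emph{regardless} of whether $f$ itself is in $A$ or not, because the definitions of $j_1,j_2$ use $N_j\cap A - f$ and $N_j\setminus A - f$, i.e.\ they explicitly remove $f$.

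Next I would analyze the coin flip for $f$. With probability $1/2$ we have $f\notin A$, which is exactly the event needed in Lemma~\ref{lem:okIfJ1LeqJ2}; and crucially this event is independent of $\mathcal{C}$, hence independent of the now-deterministic values $\tilde{\jmath}_A=j_1$ and $\tilde{\jmath}_B=j_2$. So it remains to argue that, conditioned on $\mathcal{C}$, we have $j_1\le j_2$ with probability at least $1/2$. But under $\mathcal{C}$ both $j_1$ and $j_2$ are constants, so this probability is either $0$ or $1$ — and I need the averaging to work out. The right way to phrase it: I would instead condition only on the \emph{unordered pair} of sets, i.e.\ on $\{A\setminus\{f\},\,(N\setminus A)\setminus\{f\}\}$ together with the fact that $f$ is distributed to one of the two sides uniformly — no wait, the cleaner route is the classical one. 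I would condition on the set $B := A \triangle \{f\}$ construction: consider the two outcomes obtained by the coin flip on $f$, namely $A$ (if the flip puts $f$ in $A$) and $A' := A\setminus\{f\}$ with $f$ going to $N\setminus A'$. These two outcomes are equiprobable. In the outcome where $f\in A$, Lemma~\ref{lem:okIfJ1LeqJ2} does not directly apply. In the outcome where $f\notin A$, we need $j_1\le j_2$; but here $j_1$ is computed from $N_j\cap A-f$ = (the fixed $A$-part not containing $f$) and $j_2$ from $N_j\setminus A - f$ = (the fixed complementary part), so swapping the roles of the two sides is exactly swapping $j_1\leftrightarrow j_2$.

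So the real symmetry I would exploit is this: condition on the partition of $N\setminus\{f\}$ into two sets $P$ and $Q$ (via independent coin flips), and then note that $(A\cap(N\setminus\{f\}), (N\setminus A)\cap(N\setminus\{f\}))$ equals $(P,Q)$ or $(Q,P)$ each with probability... no, it equals $(P,Q)$ with probability $1$ once we've labeled. Let me restate the intended argument: condition on everything except $f$'s coin flip. Then $A\setminus\{f\}$ and $(N\setminus A)\setminus\{f\}$ are fixed sets $P$ and $Q$. Define $\alpha = \min\{j : f\in\Span(N_j\cap P)\}$ and $\beta=\min\{j : f\in\Span(N_j\cap Q)\}$; these are fixed. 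If $f$'s flip lands in $A$ (prob.\ $1/2$): then $j_1 = \min\{j: f\in\Span((N_j\cap A)-f)\} = \min\{j : f\in \Span(N_j\cap P)\} = \alpha$ and $j_2 = \beta$; also $f\in A$ so the Lemma hypothesis fails. If $f$'s flip lands in $N\setminus A$ (prob.\ $1/2$): then $j_1=\alpha$, $j_2=\beta$ still, and $f\notin A$, so by Lemma~\ref{lem:okIfJ1LeqJ2} $f$ is selected provided $\alpha\le\beta$. That only gives probability $\tfrac12\cdot\mathbf{1}[\alpha\le\beta]$, which is $0$ when $\alpha>\beta$ — not enough. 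The fix (and the main subtlety to get right) is that I must \emph{not} condition on the full partition; instead I should condition on the unordered pair $\{P', Q'\}$ where $P' = A\setminus\{f\}$, and observe that by the exchangeability of the $n-1$ independent fair coins, conditioned on the multiset of the two parts, the \emph{ordered} pair $(A\setminus\{f\}, (N\setminus A)\setminus\{f\})$ is equally likely to be $(P',Q')$ or $(Q',P')$. Hmm, that is false too since coins are on labeled elements. The genuinely correct framing, which I would write out carefully, is: pair up the element $f$ with a "shadow": consider the bijection on the probability space that toggles $f$'s membership while keeping all else fixed. Under this measure-preserving involution, $(j_1,j_2)$ is invariant (since $f$ is excluded from both defining sets) but the event $f\in A$ is toggled. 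Therefore $\Pr[j_1\le j_2 \text{ and } f\notin A] = \Pr[j_1\le j_2 \text{ and } f\in A]$, so each equals $\tfrac12\Pr[j_1\le j_2]$. Combined with Lemma~\ref{lem:okIfJ1LeqJ2} this gives $\Pr[f\text{ selected}] \ge \tfrac12\Pr[j_1\le j_2]$, and it remains to show $\Pr[j_1\le j_2]\ge\tfrac12$.

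For the final piece — $\Pr[j_1\le j_2]\ge 1/2$ — I would again use the toggling involution together with a second symmetry: consider instead the involution that leaves $f$ fixed but swaps the two sides, i.e.\ replaces $A\setminus\{f\}$ by $(N\setminus A)\setminus\{f\}$ (toggle every element except $f$). Wait, that changes $f$'s relative position too. The clean statement: let $X = A\triangle\{f\}$-free data; on the event $f\notin A$ (which has probability $1/2$ and on which we apply the Lemma), condition on the set $A$ (hence $P := A$) and on the set $(N\setminus A)\setminus\{f\} =: Q$, but NOW re-randomize by the global toggle $\sigma$ that sends each element $g\ne f$ to the opposite side (so $A\mapsto Q\cup\{?\}$...). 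I think the honest version of the paper's argument is: $j_1$ and $j_2$ are defined symmetrically in $A$ vs.\ $N\setminus A$ after removing $f$; the map that replaces $A$ by $(N\setminus\{f\}) \setminus (A\setminus\{f\}) \cup (A\cap\{f\})$ — i.e.\ complement everything except $f$ — is measure-preserving, swaps $j_1\leftrightarrow j_2$, and fixes the event $f\notin A$... but it does NOT fix that event. So actually one must handle $j_1 = j_2$ (which can only happen if both are $\infty$, since weights are distinct so the thresholds are element-indices which would have to be the same element on both sides — impossible unless $\infty$) and then: on $\{f\notin A\}$, $\Pr[j_1<j_2] = \Pr[j_2<j_1]$ by the complement-everything-but-$f$ involution restricted appropriately, hence $\Pr[j_1\le j_2\mid f\notin A]\ge 1/2$. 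Putting it together: $\Pr[f\text{ selected}]\ge\Pr[j_1\le j_2,\,f\notin A] = \Pr[f\notin A]\cdot\Pr[j_1\le j_2\mid f\notin A]\ge \tfrac12\cdot\tfrac12 = \tfrac14$. The main obstacle, which I would spend the most care on, is setting up the conditioning/involution cleanly so that the two symmetries — toggling $f$, and complementing-everything-but-$f$ — are applied to the right conditional measures without double-counting, and checking the boundary behavior ($j_1$ or $j_2$ equal to $\infty$, and the impossibility of $j_1=j_2<\infty$ from distinct weights). Once that is pinned down, $4$-competitiveness follows by linearity of expectation over $f\in\OPT$, since $\OPT = \sum_{f\in\OPT}\weight(f)$ and the returned set has expected weight $\ge \tfrac14\OPT$.
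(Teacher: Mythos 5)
Your final argument is correct and is essentially the paper's own proof: since $j_1,j_2$ are determined by the coin flips of elements other than $f$, the event $f\notin A$ is independent of them, and complementing $A$ on $N\setminus\{f\}$ is a measure-preserving involution that swaps $j_1\leftrightarrow j_2$, giving $\Pr[j_1\le j_2]\ge 1/2$, hence $\Pr[f\notin A \text{ and } j_1\le j_2]\ge 1/4$, after which Lemma~\ref{lem:okIfJ1LeqJ2} and linearity of expectation over $f\in\OPT$ finish exactly as in the paper. (Your mid-proof aside that this involution ``does NOT fix'' the event $f\notin A$ is mistaken---it does fix it, which is precisely what legitimizes restricting to $\{f\notin A\}$---but the final assembled argument does not rely on that slip.)
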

\begin{proof}
The key observation is that for every $S\subseteq N$,
either $j_1\leq j_2$ for $A=S$ or $j_1\leq j_2$ for
$A=N\setminus S$. Since the two events
$A=S$ and $A=N\setminus S$
occur with the same probability, we obtain
$\Pr[j_1\leq j_2] \geq 0.5$.
Furthermore, whether $f\in A$ or not is independent
of the two random variables $j_1$ and $j_2$.
Thus, 
\begin{equation*}
\Pr[f\not\in A \text{ and } j_1\leq j_2]
  = \Pr[f\not\in A] \cdot \Pr[j_1\leq j_2]
  \geq \frac{1}{2}\cdot \frac{1}{2} = \frac{1}{4},
\end{equation*}
and by Lemma~\ref{lem:okIfJ1LeqJ2}, the probability
of $f$ being selected by Algorithm~\ref{alg:fo} is
at least $1/4$.
\end{proof}

\section{Classical
secretary problem for
laminar matroids}\label{sec:laminar}

Let $M=(N,\mathcal{I})$ be a laminar matroid whose
constraints are defined by the laminar family
$\mathcal{L}\subseteq 2^N$ with upper bounds $b_L$
for $L\in\mathcal{L}$
on the number of elements that can be chosen from
$\mathcal{L}$, i.e.,
$\mathcal{I}=\{I\subseteq N \mid |I\cap L|\leq b_L
\;\forall L\in \mathcal{L}\}$.
Without loss of generality
we assume $b_L\geq 1$ for $L\in\mathcal{L}$, since otherwise
we can simply remove all elements of $L$ from $M$.
Furthermore, we assume $N\in \mathcal{L}$,
since otherwise a redundant constraint $|I\cap N|\leq b_N$
can be added
by choosing a sufficiently large right-hand side $b_N$.

\subsection{A simple $27e/2$-competitive
algorithm for the laminar secretary problem}\label{subsec:simpleLam}

To reduce the matroid secretary problem on $M$ to a
problem on a partition matroid, we first number the
elements $N=\{f_1,\dots, f_n\}$ 
such that for any set $L\in \mathcal{L}$, the elements
in $L$ are numbered consecutively, i.e., $L=\{f_p,\dots, f_q\}$
for some $1\leq p < q \leq n$.
Figure~\ref{fig:lamNumbering} shows an example of such a
numbering.

\begin{figure}[h!t]
$$\includegraphics[scale=0.9]{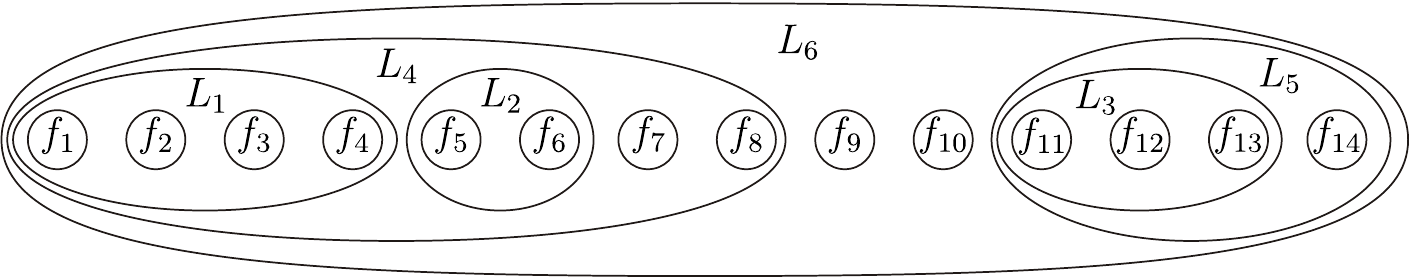}$$
\vspace{-1em}
\caption{An example of a numbering of the elements of the
ground set such that each set
$L\in \mathcal{L}=\{L_1,\dots, L_6\}$
of the laminar family contains consecutively
numbered elements.
}
\label{fig:lamNumbering}
\end{figure}

For the sake of exposition, we
start by presenting a conceptually simple
algorithm and analysis, based on the
introduced numbering of the ground set, that leads
to a competitive ratio of $27e/2$.
The claimed $3\sqrt{3}e$-competitive algorithm follows the
same ideas, but strengthens both the approach and analysis.
Algorithm~\ref{alg:simpleLaminar} describes our
$27e/2$-competitive procedure.
%
\begin{algorithm}
\caption{A $27e/2$-competitive algorithm for laminar matroids.}
\label{alg:simpleLaminar}
\begin{enumerate}
\item \textbf{Observe} $\Binom(n,2/3)$
elements of $N$, which we denote by $A\subseteq N$.\\
\textbf{Determine} maximum weight independent set
$\OPT_A=\{f_{i_1},\dots, f_{i_p}\}$ in $A$
where $1 \leq i_1 < \dots < i_p \leq n$.
Define $P_j = \{f_k \mid k\in \{i_{j-1},\dots, i_{j}\}\}\setminus A$
for $j\in \{1,\dots, p+1\}$, where we set $i_0=0, i_{p+1}=n$.
Let
\begin{align*}
\mathcal{P}_{\odd}(A) &= \{P_j\mid
  j\in [p+1], j \text{ odd}\},\\
\mathcal{P}_{\even}(A) &= \{P_j\mid
  j\in [p+1], j \text{ even}\}.
\end{align*}
\textbf{If} $\OPT_A=\emptyset$ \textbf{then} set $\mathcal{P}=\{N\setminus A\}$,\\
\textbf{else set} $\mathcal{P}=\mathcal{P}_{\odd}(A)$
with probability $0.5$,
otherwise set $\mathcal{P}=\mathcal{P}_{\even}(A)$.

\item\label{item:sLPhase2} \textbf{Apply} to each set $P\in \mathcal{P}$
an $e$-competitive classical secretary algorithm
to obtain an element $g_P\in P$.\\
\textbf{Return} $\{g_P\mid P\in \mathcal{P}\}$.
\end{enumerate}

\vspace*{-1em}
\end{algorithm}
Notice that applying a standard secretary
algorithm to the sets of $\mathcal{P}$
in step~\ref{item:sLPhase2} can easily be performed
by running $|\mathcal{P}|$ many $e$-competitive secretary algorithms
in parallel, one for each set $P\in \mathcal{P}$.
Elements are drawn one by one in the second phase, and they are
forwarded to the secretary algorithm corresponding to the set $P$
that contains the drawn element, and are discarded if no set
of $\mathcal{P}$ contains the element.
Furthermore, observe that $A$ contains each element of $N$
independently with probability $2/3$.

We start by observing that Algorithm~\ref{alg:simpleLaminar}
returns an independent set.

\begin{lemma}
Let $A\subseteq N$ with $\OPT_A\neq \emptyset$ and let
$\mathcal{P}\in \{\mathcal{P}_{\even}(A), \mathcal{P}_{\odd}(A)\}$.
For each $P\in \mathcal{P}$, let $g_p$ be any element in $P$.
Then $\{g_P\mid P\in \mathcal{P}\}\in \mathcal{I}$.
\end{lemma}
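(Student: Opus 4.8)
The plan is to show that the set $G := \{g_P \mid P \in \mathcal{P}\}$ satisfies $|G \cap L| \le b_L$ for every $L \in \mathcal{L}$, which is exactly the condition for membership in $\mathcal{I}$. The key structural fact I would exploit is the way $\OPT_A$ interacts with the laminar family: since $\OPT_A = \{f_{i_1}, \dots, f_{i_p}\}$ is the maximum weight independent set in $A$, it is in particular independent, so $|\OPT_A \cap L| \le b_L$ for every $L$. The sets $P_j$ are the ``gaps'' between consecutive elements of $\OPT_A$ (in the consecutive numbering), and by construction $G$ picks at most one element from each $P_j$ belonging to $\mathcal{P}$.

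The heart of the argument is a counting bound: for any $L \in \mathcal{L}$, I would bound $|G \cap L|$ by the number of sets $P_j \in \mathcal{P}$ that meet $L$, and relate this to $|\OPT_A \cap L|$. The crucial observation is that because the numbering makes each $L$ an interval $\{f_p, \dots, f_q\}$, and the $P_j$'s are intervals $\{f_{i_{j-1}}, \dots, f_{i_j}\} \setminus A$ whose endpoints are exactly the elements of $\OPT_A$, the interval $L$ can overlap at most $|\OPT_A \cap L| + 1$ of the $P_j$'s. Now the parity trick kicks in: since $\mathcal{P}$ is either $\mathcal{P}_{\odd}(A)$ or $\mathcal{P}_{\even}(A)$, consecutive $P_j$'s are never both in $\mathcal{P}$; among any run of $t$ consecutive indices $j$, at most $\lceil t/2 \rceil$ lie in $\mathcal{P}$. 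Applying this to the $\le |\OPT_A \cap L| + 1$ consecutive indices that meet $L$, I get $|G \cap L| \le \lceil (|\OPT_A \cap L| + 1)/2 \rceil$. Writing $k := |\OPT_A \cap L| \le b_L$, this is $\lceil (k+1)/2 \rceil \le k$ whenever $k \ge 1$, and equals $1 \le b_L$ when $k = 0$ (using $b_L \ge 1$). In all cases $|G \cap L| \le \max(1, \lceil (k+1)/2\rceil) \le \max(1, k) \le b_L$, so $G \in \mathcal{I}$.

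I expect the main obstacle to be making the ``$L$ overlaps at most $k+1$ of the $P_j$'s, and these are consecutive'' step fully rigorous. One has to be careful about the boundary pieces $P_1$ and $P_{p+1}$ (which use the artificial endpoints $i_0 = 0$ and $i_{p+1} = n$) and about the fact that removing $A$ from each interval can only shrink it, never disconnect the pattern of which intervals meet $L$. The cleanest way I would phrase it: the indices $j$ for which $P_j \cap L \ne \emptyset$ form a contiguous block $\{j^-, j^-+1, \dots, j^+\}$, because the $P_j$ tile (a subset of) the linearly ordered ground set in order; and within that block, all the ``interior'' separators $f_{i_j}$ for $j^- \le j < j^+$ lie in $L$ and belong to $\OPT_A$, giving at least $j^+ - j^-$ elements of $\OPT_A \cap L$, i.e. the block has length at most $k+1$. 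Combined with the parity bound $|\{j \in \text{block} : P_j \in \mathcal{P}\}| \le \lceil (k+1)/2 \rceil \le b_L$, and noting $|G \cap L| \le |\{j \in \text{block} : P_j \in \mathcal{P}\}|$ since $G$ takes one element per selected $P_j$, the lemma follows.
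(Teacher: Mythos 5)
Your argument is correct and is essentially the paper's own proof: both rest on the consecutive numbering (so each $L\in\mathcal{L}$ is an interval whose interior separators between the gaps $P_j$ are elements of $\OPT_A\cap L$) combined with the fact that $\mathcal{P}$ contains only every other gap, and then compare $|G\cap L|$ with $|\OPT_A\cap L|\le b_L$, invoking $b_L\ge 1$ when $|\OPT_A\cap L|=0$. The paper does the bookkeeping slightly differently---between any two selected elements lie at least two elements of $\OPT_A$, hence either $|G\cap L|=1$ or $|G\cap L|\le|\OPT_A\cap L|$---which is the same bound as your $\lceil(k+1)/2\rceil\le\max(1,k)$.
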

\begin{proof}
Let $I=\{g_P\mid P\in \mathcal{P}\}$ be a set
as stated in the lemma.
Notice that for any two elements $f_k, f_\ell \in I$ with
$k< \ell$ we have
$|\OPT_A \cap \{f_k,f_{k+1}, \dots, f_{\ell}\}|\geq 2$.
Now consider a set $L\in \mathcal{L}$ corresponding to one
of the constraints of the underlying laminar matroid.
By the above observation and since $L$ is consecutively numbered,
at least one of the following holds:
(i) $|L\cap I|=1$, or (ii) $|L\cap \OPT_A |\geq |L\cap I|$.
If case (i) holds, then the constraint corresponding to $L$
is not violated since we assumed $b_L\geq 1$.
If (ii) holds, then $L$ is also not violated since
$|L\cap I|\leq |L\cap \OPT_A| \leq b_T$
because $\OPT_A \in \mathcal{I}$.
Hence $I\in \mathcal{I}$.
\end{proof}

\begin{theorem}\label{thm:analysisSimpleLaminar}
Algorithm~\ref{alg:simpleLaminar} is
$27e/2$-competitive for the laminar matroid secretary
problem.
\end{theorem}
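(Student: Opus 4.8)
The plan is to show that each element of the offline optimum $\OPT$ survives to be returned by Algorithm~\ref{alg:simpleLaminar} with probability at least $\tfrac{2}{27e}$, and conclude by linearity of expectation. Fix $f = f_k \in \OPT$. Since the algorithm's return value is a sum of contributions from disjoint classical secretary instances, the event that $f$ is selected decomposes as: (a) $f \notin A$ (so $f$ is a candidate in the second phase); (b) the set $P\in\mathcal{P}$ containing $f$ actually belongs to the chosen family $\mathcal{P}$; and (c) the $e$-competitive secretary algorithm run on $P$ picks $f$. I would bound each of these in turn.

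\textbf{Step 1 (the sampling event).} Element $f$ avoids $A$ with probability $1/3$, since $A$ includes each element independently with probability $2/3$. Conditioned on $f\notin A$, I would argue that $f$ lies in some part $P_j$ of the partition induced by $\OPT_A$ (when $\OPT_A\neq\emptyset$) — this is automatic from the definition of the $P_j$, which cover $N\setminus A$. The subtlety is the case $\OPT_A=\emptyset$; but then $\mathcal{P}=\{N\setminus A\}$ deterministically and step (b) is free.

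\textbf{Step 2 (the parity/family event).} Given the partition $\{P_j\}$, the algorithm picks $\mathcal{P}_{\odd}(A)$ or $\mathcal{P}_{\even}(A)$ each with probability $1/2$, so the particular $P_j$ containing $f$ is in $\mathcal{P}$ with probability exactly $1/2$, independently of everything in the second phase. This is where the factor of $2$ in $27e/2$ (rather than $27e$) will be recovered — or rather, lost: splitting into odd/even is exactly what lets the feasibility lemma go through, at the cost of a factor $2$.

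\textbf{Step 3 (the secretary event) — the main obstacle.} Conditioned on $f\notin A$, on the realization of $A$, and on $P_j \ni f$ being in $\mathcal{P}$, I need that the classical $e$-competitive secretary algorithm applied to $P_j$ returns $f$ with probability at least $\tfrac{1}{e}\cdot\Pr[f \text{ is the max-weight element of } P_j \mid \cdots]$. The cleanest route is to show that $f$ is \emph{the heaviest element of $P_j$} with probability at least $\tfrac{2}{9}$ (or whatever constant makes $\tfrac{1}{3}\cdot\tfrac{1}{2}\cdot c\cdot\tfrac{1}{e} = \tfrac{2}{27e}$, i.e.\ $c=\tfrac{4}{9}$ — I would track the exact constant during the writeup). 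The heart of the matter: $f=f_k\in\OPT$ sits in $P_j$ between two consecutive elements $f_{i_{j-1}}, f_{i_j}$ of $\OPT_A$. I would claim that all elements of $P_j$ other than $f$, being non-$\OPT_A$ elements strictly between two consecutive $\OPT_A$-elements, are lighter than $f$ unless... — here is exactly where I expect to need a matroid-exchange argument: if $g\in P_j$ were heavier than $f=f_k\in\OPT$, then since the $\OPT_A$-elements bracketing $P_j$ were \emph{not} improved by $g$ during the greedy construction of $\OPT_A$, some consecutive-numbering / laminar-span property forces $g$ to already be spanned, which should transfer to $f$ being replaceable, contradicting $f\in\OPT$ or the maximality structure. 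Making this precise — relating "heaviest in $P_j$" to membership of $f$ in the global $\OPT$ via the consecutive numbering of laminar sets — is the technical core, and I expect it to hinge on the observation (used in the feasibility lemma) that any two elements of a single part $P_j$ have exactly one $\OPT_A$-element strictly between them, combined with the greedy/exchange characterization of $\OPT_A$. Once the probability that $f$ is the top element of its part is bounded below by a constant, multiplying the four independent-or-conditioned factors $\tfrac{1}{3}\cdot\tfrac{1}{2}\cdot(\text{const})\cdot\tfrac{1}{e}$ yields the $\tfrac{2}{27e}$ selection probability and hence $27e/2$-competitiveness.
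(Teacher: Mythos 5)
There is a genuine gap, and it sits exactly where you flagged it: Step 3. Your plan is to prove that each $f\in\OPT$ is itself selected with probability at least $\tfrac{2}{27e}$, by arguing that $f$ is the heaviest element of its part $P_j$ with constant probability via a matroid-exchange argument. No such argument exists, because the claim is false: elements of $N\setminus A$ that are heavier than $f$ but excluded from $\OPT$ by a tight laminar constraint can legitimately land in $f$'s part, and nothing forces an $\OPT_A$-element to separate them from $f$. Concretely, take $L=\{f_1,\dots,f_{100}\}$ with $b_L=1$, all of large weight $w(f_1)>\dots>w(f_{100})$, plus a light element $f_{101}$, and $b_N=2$; then $\OPT=\{f_1,f_{101}\}$. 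The part containing $f_{101}$ extends left to the first element of $L$ that made it into $\OPT_A$ (at most one element of $L$ can), so $f_{101}$ is the heaviest element of its part only when the elements of $L\setminus A$ form a prefix of $L$, an event of probability roughly $(2/3)^{100}$. Hence the per-element selection probability you are after is not bounded below by any constant, and the competitiveness of Algorithm~\ref{alg:simpleLaminar} cannot be established by showing that each optimum element is returned.

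The paper's proof repairs exactly this point by changing what is credited to $f$: call $f\in\OPT$ \emph{solitary} if the part $P\in\mathcal{P}$ containing it satisfies $P\cap\OPT=\{f\}$, and let $X_f$ be the weight of the element the algorithm picks from that part (zero if $f$ is not solitary). Since distinct solitary elements lie in distinct parts, $\sum_{f\in\OPT}X_f\leq w(I)$, and when $f$ is solitary the classical secretary algorithm returns the maximum of $P$ with probability $1/e$, whose weight is at least $w(f)$ because $f\in P$ --- so one never needs $f$ itself to win, only that something at least as heavy from its part is taken. Solitariness is then forced by a simple event: if the two neighbouring $\OPT$-elements (in the index order) lie in $A$, they lie in $\OPT_A$ (using $\OPT\cap A\subseteq\OPT_A$) and hence bracket $f$'s part, so no other $\OPT$-element can share it; together with $f\notin A$ and the odd/even coin flip this has probability $(2/3)^2\cdot\tfrac13\cdot\tfrac12=\tfrac{2}{27}$, giving $\tfrac{2}{27e}$ per element. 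Your factors $\tfrac13$, $\tfrac12$, and the target constant $\tfrac49=(2/3)^2$ are the right numbers, but they must be attached to the event ``the two index-neighbours of $f$ in $\OPT$ are sampled'' and to the weight-accounting above, not to the event ``$f$ is the heaviest in its part.''
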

\begin{proof}
Let $\OPT\in \mathcal{I}$ be the maximum weight independent
set in $N$, i.e., the offline optimum. Furthermore, let $I$
be the set returned by Algorithm~\ref{alg:simpleLaminar},
and let $f\in \OPT$.
We say that $f$ is \emph{solitary}
if $\exists\: P\in \mathcal{P}$ with
$P\cap \OPT=\{f\}$.
Similarly we call $P\in \mathcal{P}$ \emph{solitary}
if $|P \cap \OPT|=1$.
We prove the theorem by showing that each element $f\in\OPT$
is solitary with probability $\geq 2/27$.
This indeed implies the theorem since we can do the following type
of accounting. Let $X_{f}$ be the random variable which
is zero if $f$ is not solitary, and
otherwise 
equals the weight of the element $g\in I$ that was chosen
by the algorithm from the set $P$ that contains $f$.
By only considering the weights of elements chosen
in solitary sets $\mathcal{P}$ we obtain
\begin{equation}\label{eq:expGood}
\E[w(I)]\geq \sum_{f\in \OPT} \E[X_{f}].
\end{equation}
However, if each element $f\in \OPT$ is solitary with
probability $2/27$, we obtain
$\E[X_{f}] \geq \frac{2 w(f)}{27 e}$, because the classical
secretary algorithm will choose with probability $1/e$
the maximum weight element of the set $P$ that contains the
solitary element $f$.
Combining this with~\eqref{eq:expGood}
yields $\E[w(I)]\geq \frac{2}{27 e} w(\OPT)$ as desired.
Let us then show that each $f \in \OPT$ is solitary
with probability $\geq 2/27$.

Let $f_i\in \OPT$. We assume that $\OPT$ contains an
element with a lower index than $i$ and one with a higher index
than $i$.
The cases of $f_i$ being the element with highest or lowest
index in $\OPT$ follow analogously.
Let $f_j\in \OPT$ by the element
of $\OPT$ with the largest index $j<i$. Similarly,
let $f_k\in \OPT$ be the element of $\OPT$
with the smallest index $k>i$.
One well-known matroidal property that we use is
$\OPT\cap A\subseteq \OPT_A$.
Hence, if $f_j, f_k \in A$ then $f_j,f_k\in \OPT_A$,
and if furthermore $f_i\not\in A$, then $f_i$ will be the
only element of $\OPT$ in the set
$P\in \mathcal{P}_{\odd}(A)\cup \mathcal{P}_{\even}(A)$
that contains $f_i$. Hence, if the coin flip in
Algorithm~\ref{alg:simpleLaminar} chooses the family
$\mathcal{P}\in \{\mathcal{P}_{\odd}(A), \mathcal{P}_{\even}(A)\}$
that contains $P$, then $f_i$ is solitary.
To summarize, $f_i$ is solitary if $f_j,f_k \in A$,
$f_i\not\in A$ and the coin flip for $\mathcal{P}$
turns out right. This happens with probability
$\left(\frac{2}{3}\right)^2 \cdot%
\left(1-\frac{2}{3}\right) \cdot \frac{1}{2} = \frac{2}{27}$.
\end{proof}

\subsection{A $3\sqrt{3}e$-competitive algorithm for the
laminar matroid secretary problem}\label{subsec:betterLam}

One conservative aspect of the proof of
Theorem~\ref{thm:analysisSimpleLaminar} is that we only
consider the contribution of solitary elements.
Additionally, a drawback of Algorithm~\ref{alg:simpleLaminar}
itself is that about half of the elements of $N\setminus A$
are ignored as we only select from either
$\mathcal{P}_{\odd}(A)$ or $\mathcal{P}_{\even}(A)$.
In this section, we address these two
weaknesses to obtain a $3\sqrt{3}e$-competitive
algorithm.

We start by describing a stronger way to define a partition $\mathcal{P}$
of $N\setminus A$ and reduce the problem to a matroid secretary
problem on the unitary partition matroid defined on $\mathcal{P}$.

For any independent set $I\in \mathcal{I}$, we define a partition
$\widetilde{\mathcal{P}}(I)$ of $N$ as follows.
If $I=\emptyset$, we set $\widetilde{\mathcal{P}}(I)=\{N\}$.
Otherwise $\widetilde{\mathcal{P}}(I)$ contains a
set $N_f\subseteq N$ for each element $f\in I$,
i.e., $\widetilde{\mathcal{P}}(I)=\{N_f \mid f\in I\}$.
To define the partition $\widetilde{\mathcal{P}}(I)$,
we specify to which
set $N_{f}$ an element $f_i\in N$ belongs.
Let $L\in \mathcal{L}$ be the smallest set that
contains $f_i$ and such that $L\cap I \neq \emptyset$.
Such a set must exist since $N\in \mathcal{L}$ by assumption.
If $L\cap I$ contains at least one element $f_j$
with $j\leq i$, then let $j$
be the largest index such
that $j\leq i$ and $f_j\in L\cap I$.
Otherwise let $j$ be the smallest index satisfying $j> i$
and $f_j\in L\cap I$.
We assign the element $f_i$ to $N_{f_j}$.

Notice that in any case, $j$ is either the largest
index $j\leq i$ with $f_j\in I$ or the smallest index $j>i$
with $f_j\in I$.
Again, we are interested to define a partition only on
elements $N\setminus A$ not drawn in the first phase.
We therefore define for any $A\subseteq N$ the partition
$\mathcal{P}(A)=\{\widetilde{P}\setminus A
\mid \widetilde{P}\in \widetilde{\mathcal{P}}(\OPT_A)\}$.
Algorithm~\ref{alg:laminarImproved} describes our
$3\sqrt{3}e$-competitive procedure.

\begin{algorithm}
\caption{A $3\sqrt{3} e$-competitive algorithm
for laminar matroids.}
\label{alg:laminarImproved}
\begin{enumerate}
\item \textbf{Observe} $\Binom(n,1/\sqrt{3})$
elements of $N$, which we denote by $A\subseteq N$.\\
\textbf{Determine} maximum weight independent set
$\OPT_A$ in $A$.
\item \textbf{Apply} to each set $P\in \mathcal{P}(A)$
an $e$-competitive
classical secretary algorithm to obtain
$g_P\in P$.\\
\textbf{Return} $\{g_P\mid P\in \mathcal{P}(A)\}$.
\end{enumerate}
\end{algorithm}

We first show that the set returned by
Algorithm~\ref{alg:laminarImproved} is indeed independent.
For this, we start by observing a basic property of the
sets $N_f$ forming the underlying partition
$\widetilde{\mathcal{P}}(\OPT_A)=\{N_f\mid f\in \OPT_A\}$.
\begin{lemma}\label{lem:consecInd}
Let $I\in \mathcal{I}$ with $I\neq\emptyset$.
Each set $N_{f_i}$ of the partition
$\widetilde{\mathcal{P}}(I)=\{N_{f_i}\mid f_i\in I\}$
is of the form $N_{f_i} = \{f_j, f_{j+1}, \dots, f_k\}$
for some $1\leq j \leq i \leq k \leq n$.
\end{lemma}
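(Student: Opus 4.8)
The plan is to prove the (slightly stronger) structural fact that, writing $i^-$ and $i^+$ for the indices of the immediate predecessor and successor of $f_i$ among the elements of $I$ (with the conventions $i^-:=0$ and $i^+:=n+1$ if they do not exist), the set $N_{f_i}$ is exactly an interval $\{f_j,\dots,f_k\}$ with $i^-<j\le i\le k<i^+$. Throughout I use two elementary facts coming from the chosen numbering. First, every $L\in\mathcal{L}$ is a \emph{block} $\{f_p,\dots,f_q\}$. Second, the members of $\mathcal{L}$ containing a fixed element $f_\ell$ are pairwise intersecting, hence form a chain under inclusion; since $N\in\mathcal{L}$ and $I\neq\emptyset$, among them there is a unique smallest one meeting $I$, which I denote $L_\ell$. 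This $L_\ell$ is precisely the set ``$L$'' used in the definition when $f_\ell$ is assigned, and $f_i\in N_{f_i}$ is immediate from the definition.

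Step 1 (a block $L_\ell$ reaches a neighbour of $f_\ell$). Let $f_\ell\notin I$. Since $L_\ell$ is a block containing $f_\ell$ and some $f_m\in I$, and $m\le i^-$ when $m<\ell$ while $m\ge i^+$ when $m>\ell$, a one-line interval argument gives $f_{i^-}\in L_\ell$ or $f_{i^+}\in L_\ell$. Reading off the definition of $N_{\cdot}$: if $f_{i^-}\in L_\ell$ then $i^-$ is the largest index $\le\ell$ of an element of $L_\ell\cap I$, so $f_\ell$ is placed in $N_{f_{i^-}}$; and if $f_{i^-}\notin L_\ell$ then $L_\ell\cap I$ has no element of index $\le\ell$ (otherwise that element together with $f_\ell$ would force $f_{i^-}\in L_\ell$), so $f_\ell$ is placed using the smallest index $>\ell$ in $L_\ell\cap I$, which is $i^+$. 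Hence every element of $N_{f_i}\setminus\{f_i\}$ lies strictly between $f_{i^-}$ and $f_{i^+}$, so it only remains to rule out gaps.

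Step 2 (contiguity inside one block of non-$I$ elements). Fix consecutive $I$-elements $f_a<f_b$ (nothing of $I$ strictly between) and indices $a<\ell<\ell'<b$. I claim $f_{\ell'}\in N_{f_a}\Rightarrow f_\ell\in N_{f_a}$, and symmetrically $f_\ell\in N_{f_b}\Rightarrow f_{\ell'}\in N_{f_b}$. For the first: by Step 1, $f_{\ell'}\in N_{f_a}$ means $f_a\in L_{\ell'}$, so the block $L_{\ell'}$ contains $f_a,f_\ell,f_{\ell'}$ and meets $I$, whence minimality of $L_\ell$ gives $L_\ell\subseteq L_{\ell'}$. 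Suppose for contradiction $f_a\notin L_\ell$; then by Step 1 the block $L_\ell$ reaches $f_b$, i.e.\ it runs from an index $>a$ up to at least $b$, hence also contains $f_{\ell'}$; since $L_\ell$ meets $I$, minimality of $L_{\ell'}$ now gives $L_{\ell'}\subseteq L_\ell$, so $L_\ell=L_{\ell'}$, contradicting $f_a\in L_{\ell'}\setminus L_\ell$. Thus $f_a\in L_\ell$, i.e.\ $f_\ell\in N_{f_a}$; the symmetric statement is identical after reversing the numbering.

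Step 3 (conclusion). Apply Step 2 with $(f_a,f_b)=(f_{i^-},f_i)$ to see that $N_{f_i}\cap\{f_{i^-+1},\dots,f_{i-1}\}$ is a suffix of that index block, and with $(f_a,f_b)=(f_i,f_{i^+})$ to see that $N_{f_i}\cap\{f_{i+1},\dots,f_{i^+-1}\}$ is a prefix of that block; together with $f_i$ this is a single interval $\{f_j,\dots,f_k\}$ with $j\le i\le k$, proving the lemma, and the boundary cases $i^-=0$, $i^+=n+1$ work verbatim. I expect Step 2 to be the crux: the naive inclusion between $L_\ell$ and $L_{\ell'}$ points the ``wrong way'' for the conclusion, and the trick is that in the bad case $L_\ell$ is forced to stretch across $f_{\ell'}$, which yields the reverse inclusion and hence equality — the contradiction that makes the argument go through. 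Everything else is bookkeeping about predecessors and successors together with the block structure of $\mathcal{L}$.
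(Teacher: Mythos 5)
Your proof is correct and runs on the same engine as the paper's: the sets of $\mathcal{L}$ containing a fixed element form a chain, so the minimal such set meeting $I$ yields inclusions $L_{\ell}\subseteq L_{\ell'}$ via minimality and laminarity, and the interval structure of the blocks plus the assignment rule is then read off. The paper packages this as a single betweenness claim (if $f_p\in N_{f_i}$ and $q$ lies between $i$ and $p$, then $f_q\in N_{f_i}$), while you reorganize it as neighbor-localization plus within-gap prefix/suffix monotonicity --- a slightly stronger but essentially equivalent formulation, whose Step~2 contradiction argument spells out exactly the inclusion the paper treats tersely in its case $i<q<p$.
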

\begin{proof}
By definition of $N_{f_i}$, we clearly
have $f_i\in N_{f_i}$. Hence, all that remains to be shown
is that whenever $f_p \in N_{f_i}$, then $f_q\in N_{f_i}$
for any $q$ between $i$ and $p$, i.e., either $i< q < p$ or
$p< q < i$. In the following we distinguish these
two cases. For any element $f\in N$, we denote by
$L_f\in \mathcal{L}$ the smallest set $L\in \mathcal{L}$
that contains $f$ and satisfies $I\cap L\neq \emptyset$.

\textbf{Case $p<q<i$.}
Since $f_p\in N_{f_i}$,
there is no element $f_\ell \in L_{f_p}\cap I$ with $\ell< i$.
Furthermore, $f_p, f_i \in L_{f_p}$ implies $f_q\in L_{f_p}$,
because $L_{f_p}$ contains a sequence of
consecutively numbered elements.
As a consequence, there is also no element
$f_\ell\in L_{f_q}\cap I$ with $\ell<i$,
because $L_{f_q}\subseteq L_{f_p}$
due to laminarity and the fact that $L_{f_q}$ is
the smallest set in $\mathcal{L}$ containing $f_q$
and satisfying $I\cap L_{f_q}\neq \emptyset$.
Hence $f_q\in N_{f_i}$.

\textbf{Case $i<q<p$.}
As in the previous case we have
$f_i \in L_{f_q} \subseteq L_{f_p}$,
and there is no $\ell$ with $i<\ell<p$ such that $f_\ell\in I$,
using again $f_p\in N_{f_i}$.
Thus, $f_q\in N_{f_i}$.
\end{proof}

The next lemma implies that
Algorithm~\ref{alg:laminarImproved} returns an
independent set.

\begin{lemma}
Let $I\subseteq \mathcal{I}$ and let
$J\subseteq N$ with $|J\cap \widetilde{P}|\leq 1$
$\forall \widetilde{P}\in \widetilde{\mathcal{P}}(I)$.
Then $J\in \mathcal{I}$.
\end{lemma}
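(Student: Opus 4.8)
The plan is to show $J\in\mathcal I$ by verifying $|J\cap L|\le b_L$ for every $L\in\mathcal L$. First I would dispose of the trivial case $I=\emptyset$, where $\widetilde{\mathcal P}(I)=\{N\}$, so $|J|\le 1\le b_L$ for all $L$ (using $b_L\ge 1$). So assume $I\neq\emptyset$. The key reduction is that, since $\widetilde{\mathcal P}(I)$ partitions $N$ and $J$ contains at most one element of each block, distinct elements of $J\cap L$ lie in distinct blocks; hence $|J\cap L|$ is at most the number of blocks $N_f$ ($f\in I$) with $N_f\cap L\neq\emptyset$. It therefore suffices to bound this number of ``blocks meeting $L$'' by $b_L$. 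Throughout I would keep the notation $L_f\in\mathcal L$ from the proof of Lemma~\ref{lem:consecInd} for the smallest set of $\mathcal L$ containing $f$ with $L_f\cap I\neq\emptyset$ (which exists since $N\in\mathcal L$), and recall that by construction the block containing $f_i$ is $N_{f_j}$ for some $f_j\in L_{f_i}\cap I$.

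I would then split into two cases according to whether $L$ meets $I$. In the case $L\cap I\neq\emptyset$, for any $f_i\in L$ the set $L$ itself is a candidate in the definition of $L_{f_i}$, so minimality gives $L_{f_i}\subseteq L$; hence the block center satisfies $f_j\in L_{f_i}\cap I\subseteq L\cap I$. Thus every block meeting $L$ is of the form $N_{f_j}$ with $f_j\in L\cap I$, so there are at most $|L\cap I|\le b_L$ of them, the last inequality because $I\in\mathcal I$.

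The harder case is $L\cap I=\emptyset$, where I would prove the stronger claim that all of $L$ lies in a single block, so that exactly one block meets $L$ and $1\le b_L$ finishes it. Fix $f_i,f_{i'}\in L$. First I would show $L_{f_i}=L_{f_{i'}}$: by laminarity $L$ and $L_{f_i}$ are nested, and $L_{f_i}\subseteq L$ is impossible (it would force $L_{f_i}\cap I\subseteq L\cap I=\emptyset$), so $L\subsetneq L_{f_i}$, and likewise $L\subsetneq L_{f_{i'}}$; then $f_{i'}\in L\subseteq L_{f_i}$ shows $L_{f_i}$ is a candidate for $L_{f_{i'}}$ and symmetrically, so minimality forces $L_{f_i}=L_{f_{i'}}=:L^*$. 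Writing $L=\{f_p,\dots,f_q\}$, every element of $L^*\cap I$ has index $<p$ or $>q$ (since $L^*\cap I\cap L=\emptyset$), whereas $i,i'\in[p,q]$. The block center assigned to $f_i$ is the element of $L^*\cap I$ of largest index $\le i$, or, if there is none, of smallest index $>i$; since $L^*\cap I$ has no element with index in $[p,q]$, this center equals the largest-index element of $L^*\cap I$ among those of index $<p$ when such exist, and otherwise the smallest-index element of $L^*\cap I$ (which then has index $>q$) — in both sub-cases this element does not depend on whether we started from $f_i$ or $f_{i'}$. Hence $f_i$ and $f_{i'}$ share a block.

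Combining the two cases, for every $L\in\mathcal L$ the number of blocks of $\widetilde{\mathcal P}(I)$ meeting $L$ is at most $b_L$, so $|J\cap L|\le b_L$, giving $J\in\mathcal I$. I expect the main obstacle to be the case $L\cap I=\emptyset$: one must first recognize that $L_f$ is constant over all $f\in L$, and then that the ``nearest element of $I$ inside $L^*$, below-or-else-above'' map is constant on $L$ — the latter being exactly where disjointness of $L$ from $I$ enters. (Lemma~\ref{lem:consecInd} shows blocks are intervals, which could instead be used to argue that the blocks meeting the interval $L$ form a contiguous run and then to locate an element of $I$ at an internal block boundary; the argument above avoids needing it.)
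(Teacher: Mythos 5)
Your proof is correct and follows essentially the same route as the paper's: fix $L\in\mathcal{L}$ and show $|J\cap L|\le b_L$ by arguing that when $L\cap I\neq\emptyset$ every element of $L$ is assigned to a block $N_f$ with $f\in L\cap I$ (so $|J\cap L|\le |L\cap I|\le b_L$), while when $L\cap I=\emptyset$ all of $L$ lies in a single block (so $|J\cap L|\le 1\le b_L$). The only difference is that you spell out in detail, via the sets $L_f$ and the consecutive numbering, the two assertions the paper states without proof; those verifications are sound.
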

\begin{proof}
To show $J\in \mathcal{I}$ we fix
any $L\in\mathcal{L}$ and show that $J$ satisfies the constraint
imposed on $L$, i.e., $|J\cap L|\leq b_L$.
If $I\cap L=\emptyset$, then all elements in $L$ belong
to the same set of the partition $\widetilde{\mathcal{P}}(I)$.
Hence $|J\cap L|\leq 1$, and the constraint
corresponding to $L$ is not
violated since by assumption $b_L\geq 1$.
Hence, assume $I\cap L \neq \emptyset$.
Notice that in this case
every element in $L$ will be assigned to a set
$N_{f}$ for $f\in I\cap L$, i.e.,
\begin{equation}\label{eq:partL}
L \subseteq \bigcup_{f\in I\cap L} N_f.
\end{equation}
Since at most one element is chosen out
of each $N_f$ we have
\begin{equation*}
|J\cap L| \leq |I\cap L| \leq b_L,
\end{equation*}
where the second inequality follows from $I \in \mathcal{I}$.
\end{proof}

As the family $\mathcal{P}(A)$ consists of subsets of the partition
$\widetilde{\mathcal{P}}(\OPT_A)$, the above lemma implies:

\begin{corollary}
Algorithm~\ref{alg:laminarImproved} returns an independent set.
\end{corollary}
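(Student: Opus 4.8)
The plan is to obtain the corollary as an immediate consequence of the preceding lemma by choosing the right instantiation of $I$ and $J$. First I would set $I=\OPT_A$. This is legitimate: $\OPT_A$ is a maximum weight independent set of $A\subseteq N$, hence $\OPT_A\in\mathcal{I}$, and the edge case $\OPT_A=\emptyset$ is unproblematic since the empty set is independent and the lemma is stated for all $I\in\mathcal{I}$ (with $\widetilde{\mathcal{P}}(\emptyset)=\{N\}$, in which case the conclusion just says that a set of size at most one is independent, which holds because $b_L\ge1$ for all $L\in\mathcal{L}$).

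Next I would take $J=\{g_P\mid P\in\mathcal{P}(A)\}$, the set returned by Algorithm~\ref{alg:laminarImproved}, and verify that it meets the hypothesis $|J\cap\widetilde{P}|\le1$ for every $\widetilde{P}\in\widetilde{\mathcal{P}}(\OPT_A)$. By definition $\mathcal{P}(A)=\{\widetilde{P}\setminus A\mid \widetilde{P}\in\widetilde{\mathcal{P}}(\OPT_A)\}$, and for each such part the classical secretary algorithm returns an element $g_P\in P=\widetilde{P}\setminus A\subseteq\widetilde{P}$ (and returns nothing when $P=\emptyset$). Since the sets in $\widetilde{\mathcal{P}}(\OPT_A)$ form a partition of $N$, they are pairwise disjoint, so distinct parts of $\mathcal{P}(A)$ sit inside distinct parts of $\widetilde{\mathcal{P}}(\OPT_A)$; consequently $J$ contains at most one element from each $\widetilde{P}$, as required.

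Applying the lemma to this $I$ and $J$ then yields $J\in\mathcal{I}$, i.e., Algorithm~\ref{alg:laminarImproved} returns an independent set. I do not expect any genuine obstacle: the only subtleties are the bookkeeping observations that empty parts contribute no element and that different parts of $\mathcal{P}(A)$ come from disjoint parts of $\widetilde{\mathcal{P}}(\OPT_A)$, both immediate from the construction. All the structural content has already been carried out in Lemma~\ref{lem:consecInd} and in the preceding lemma, so the corollary is essentially a one-line deduction.
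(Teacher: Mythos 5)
Your proposal is correct and matches the paper's argument: the paper likewise deduces the corollary immediately from the preceding lemma, using that each set of $\mathcal{P}(A)$ is a subset of a distinct part of $\widetilde{\mathcal{P}}(\OPT_A)$, so the returned set $J$ satisfies $|J\cap\widetilde{P}|\leq 1$ for all $\widetilde{P}\in\widetilde{\mathcal{P}}(\OPT_A)$ with $I=\OPT_A$. Your extra bookkeeping about empty parts and the case $\OPT_A=\emptyset$ is harmless and consistent with the paper.
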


It remains to show the claimed competitiveness.
\begin{theorem}\label{thm:laminarImproved}
Algorithm~\ref{alg:laminarImproved} is
$3\sqrt{3} e$-competitive for the laminar
matroid secretary problem.
\end{theorem}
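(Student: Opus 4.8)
The plan is to prove the theorem by showing that each element $f\in\OPT$ ends up in the returned set $I$ with probability at least $\frac{1}{3\sqrt{3}e}$, and then conclude by linearity of expectation exactly as in the proof of Theorem~\ref{thm:analysisSimpleLaminar}. Concretely, I would fix $f\in\OPT$ and bound from below the probability of the event that $f$ is ``isolated'' in its part of the partition $\mathcal{P}(A)$, i.e., the event that the part $\widetilde{P}\in\widetilde{\mathcal{P}}(\OPT_A)$ containing $f$ satisfies $\widetilde{P}\cap\OPT=\{f\}$ \emph{and} $f\notin A$. Conditioned on this event, the part $P=\widetilde{P}\setminus A$ of $\mathcal{P}(A)$ that contains $f$ contains no other element of $\OPT$, and since $P$ is fed into an $e$-competitive secretary algorithm, the chosen element $g_P$ has expected weight at least $\frac{1}{e}\max_{g\in P}w(g)\ge\frac{1}{e}w(f)$. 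So if I can show the isolation event has probability $\ge\frac{1}{\sqrt{3}}\cdot\frac{1}{3}=\frac{1}{3\sqrt{3}}$, I am done — but, as the section intro warns, the real gain over the $27e/2$ bound comes from \emph{not} throwing away the non-solitary contributions, so the accounting has to be a bit more careful than that crude estimate.

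\textbf{Setting up the probabilistic event.} The key structural fact is the one already used in the simple case: $\OPT\cap A\subseteq\OPT_A$. Fix $f=f_i\in\OPT$ and let $f_j,f_k$ be the neighbors of $f_i$ in $\OPT$ (largest index $j<i$ and smallest index $k>i$; the extreme cases are handled analogously, as before). I claim that if $f_j,f_k\in A$ and $f_i\notin A$, then $f_i$ is isolated in its part of $\widetilde{\mathcal{P}}(\OPT_A)$. Indeed, by $\OPT\cap A\subseteq\OPT_A$ we get $f_j,f_k\in\OPT_A$, and by Lemma~\ref{lem:consecInd} each part $N_{f'}$ of $\widetilde{\mathcal{P}}(\OPT_A)$ is a consecutive block $\{f_a,\dots,f_b\}$ with $a\le \text{(index of }f'\text{)}\le b$; since the part containing $f_i$ must contain either the largest index $\le i$ or the smallest index $>i$ that lies in $\OPT_A$, and since $f_j,f_k\in\OPT_A$ bracket $f_i$, the block containing $f_i$ is squeezed strictly between $j$ and $k$ in index, hence contains no element of $\OPT$ other than $f_i$. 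This gives $\Pr[f_i\text{ isolated}]\ge p^2(1-p)$ with $p=1/\sqrt3$, which equals $\frac{1}{\sqrt3}\cdot\frac{2}{3}=\frac{2}{3\sqrt3}$ — \emph{strictly better} than the $\frac{1}{3\sqrt3}$ we would need, and this slack, together with the fact that here we keep \emph{all} parts (no odd/even coin flip losing a factor $2$), is exactly what buys the improved ratio.

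\textbf{Finishing and the main obstacle.} With the isolation probability $\ge p^2(1-p)$ in hand, define $X_f$ to be the weight of $g_P$ when $f$ is isolated in part $P\ni f$, and $0$ otherwise; then $\E[w(I)]\ge\sum_{f\in\OPT}\E[X_f]\ge\sum_{f\in\OPT}\frac{p^2(1-p)}{e}w(f)=\frac{p^2(1-p)}{e}w(\OPT)$, so the competitive ratio is $\frac{e}{p^2(1-p)}$. Optimizing $p^2(1-p)$ over $p\in(0,1)$ gives the maximum $\tfrac{4}{27}$ at $p=\tfrac23$ — which would only recover $27e/4$, \emph{not} $3\sqrt3e\approx 14.12$; indeed $\frac{e}{p^2(1-p)}$ at $p=1/\sqrt3$ equals $\frac{e}{(1/3)(1-1/\sqrt3)}=\frac{3e}{1-1/\sqrt3}\approx 16.2e$, which is worse still. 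So the crude ``one solitary element per part'' accounting is \emph{not} enough, and \textbf{the main obstacle} is precisely the one flagged in the section introduction: one must credit each part $P$ not just when it is solitary but must also account, at least partially, for parts containing two or more elements of $\OPT$ — e.g. by arguing that when a part $P$ contains $\ell\ge 2$ elements of $\OPT$ the secretary algorithm still captures in expectation a $\frac1e$-fraction of the \emph{heaviest} of them, and by a charging scheme distributing the weight of each $f\in\OPT$ across the (few) sample-outcome patterns in which it is the heaviest $\OPT$-element of its part. Carrying out this refined charging — tracking, for a fixed $f_i\in\OPT$, which of its $\OPT$-neighbors land in $A$ and showing that the resulting expected credit to $f_i$ is at least $\frac{1}{3\sqrt3 e}w(f_i)$ with $p=1/\sqrt3$ the optimal choice — is the technical heart of the argument; the independence-of-feasibility and linearity steps above are then routine.
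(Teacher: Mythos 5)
There is a genuine gap: you have set up the framework (feasibility, the $e$-competitive secretary routine on each part, accounting over $\OPT$), correctly observed that the crude ``solitary-element'' accounting from Theorem~\ref{thm:analysisSimpleLaminar} cannot give better than roughly $27e/4$, and then explicitly deferred ``the technical heart'' --- the refined charging showing each $f\in\OPT$ receives expected credit at least $\frac{1}{3\sqrt{3}e}\,\weight(f)$ --- without carrying it out. But that deferred step \emph{is} the proof of the theorem. The paper executes it by charging each part $P$ not to its heaviest $\OPT$-element but to \emph{all} of them equally, via $Z(f)=\frac{1}{|P_f\cap\OPT|}$ for $f\notin A$ (using $\max_{g\in P}\weight(g)\geq \frac{1}{|P\cap\OPT|}\sum_{g\in P\cap\OPT}\weight(g)$), and then proving $\E[Z(f)]\geq \frac{1}{3\sqrt{3}}$ for every $f\in\OPT$. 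That lower bound requires conditioning on the events $\calE_{s,t}$ (the positions of the nearest $\OPT$-neighbors of $f$ that landed in $A$), bounding $|P_f\cap\OPT|$ by $s+t+1+K$ where $K$ is a further geometric random variable counting how far the part can extend \emph{past} one of those in-sample neighbors, adding dummy coloops to avoid boundary effects, and summing the resulting triple series to get $\frac{1}{2}q(1-q^2)$, which is maximized at $q=1/\sqrt{3}$ --- this is where the sampling probability $1/\sqrt{3}$ and the constant $3\sqrt{3}$ actually come from. None of this appears in your proposal, and your sketched alternative (credit $f$ only when it is the heaviest $\OPT$-element of its part) is a different scheme whose analysis you do not attempt.

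Two further concrete problems in what you did write. First, your structural claim that $f_j,f_k\in A$ and $f_i\notin A$ forces the part containing $f_i$ to be ``squeezed strictly between $j$ and $k$'' is false: the assignment rule is driven by the smallest laminar set meeting $\OPT_A$, so $f_i$ may be assigned to $N_{f_k}$, and by Lemma~\ref{lem:consecInd} that block can extend beyond index $k$ and absorb additional elements of $\OPT\setminus A$; this is precisely the phenomenon the paper's variable $K$ and its two cases ($f_{i_{r-(s+1)}}\notin L$ versus $\in L$) are there to handle, so isolation does \emph{not} follow from bracketing by in-sample $\OPT$-neighbors alone. Second, your intermediate arithmetic is inconsistent: $p^2(1-p)$ at $p=1/\sqrt{3}$ equals $\frac{1}{3}\bigl(1-\frac{1}{\sqrt{3}}\bigr)\approx 0.14$, not $\frac{2}{3\sqrt{3}}$ (you computed $p(1-p^2)$), so the event you bound is never ``strictly better than needed''; your own later computation concedes this. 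In short, the proposal identifies the right difficulty but does not resolve it, so the theorem is not proved.
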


\begin{proof}

Let $\OPT_{\mathcal{P}(A)}$ be the optimum solution of the
matroid secretary problem on $N\setminus A$ constrained by the
partition matroid $\mathcal{P}(A)$. Let $I$ be the solution
returned by Algorithm~\ref{alg:laminarImproved}.
Since Algorithm~\ref{alg:laminarImproved} applies an
$e$-competitive secretary algorithm to each set of
$\mathcal{P}(A)$, we have
\begin{equation}\label{eq:compToPartMat}
\E[\weight(I)]\geq \frac{1}{e}\cdot\E[\weight(\OPT_{\mathcal{P}(A)})].
\end{equation}
For $f\in N\setminus A$, we denote by $P_f$ the set
in the family $\mathcal{P}(A)$ that contains $f$.
We have,

\begin{align}
\E[\weight(\OPT_{\mathcal{P}(A)})] &=
    \E\left[\sum_{P \in \mathcal{P}(A)}
    \max_{f \in P} \weight(f)\right]
 \geq \E\left[\sum_{\substack{P \in \mathcal{P}(A),\\
        |P\cap \OPT|\geq 1}}
      \max_{f \in P} \weight(f)\right]\notag\\
&\geq \E\left[ \sum_{\substack{P \in \mathcal{P}(A),\\
     |P\cap \OPT|\geq 1}} \;
  \sum_{f \in P\cap \OPT} \frac{\weight(f)}%
                               {|P\cap \OPT|}\right]\notag\\
&= \E\left[ \sum_{f\in \OPT\setminus A}
     \frac{\weight(f)}{|P_f \cap \OPT|}
   \right].\label{eq:accToOPT}
\end{align}

Similar to the proof of Theorem~\ref{thm:analysisSimpleLaminar}
we use an accounting based on the elements of the offline
optimum $\OPT$.
For each $f\in \OPT$ we define a random variable $Z(f)$
as follows:
\begin{equation*}
Z(f) =
\begin{cases}
0 &\text{if } f\in A,\\
\frac{1}{|P_f\cap \OPT|} &\text{otherwise}.
\end{cases}
\end{equation*}
Together with~\eqref{eq:compToPartMat} and~\eqref{eq:accToOPT}
we thus obtain
\begin{equation*}
\E[\weight(I)] \geq \frac{1}{e}
  \sum_{f\in \OPT} \weight(f) \E[Z(f)].
\end{equation*}
Hence, to show that Algorithm~\ref{alg:laminarImproved} is
$3\sqrt{3} e$-competitive, is suffices to show
\begin{equation}\label{eq:toProveZ}
\E[Z(f)]\geq \frac{1}{3\sqrt{3}} \quad \forall f\in \OPT.
\end{equation}

For proving~\eqref{eq:toProveZ}, we want
to be able to treat all elements $f_i\in \OPT$ the same way,
independently of the index $i$. In particular, we want to avoid
special treatments for indices $i$ that are close to the
border, i.e., either close to $1$ or $n$.
Therefore we make the following assumptions,
which do not change the way in which the algorithm behaves: assume
that there are infinitely many dummy coloop\footnote{A coloop is
an element that is in every base of the matroid, or in other
words, a coloop element can be added to any independent set
without destroying independence.} elements (with zero weight)
denoted as
$C=\{\dots, f_{-2}, f_{-1}, f_0\} \cup \{f_{n+1}, f_{n+2}, \dots\}$.
The new (infinite) laminar matroid $M'$ is associated to the laminar family
$\mathcal{L}' = \mathcal{L} \cup \{N\cup C\}$, where $N\cup C$ has
no bound on the cardinality.

The optimum $\OPT'$ of $M'$ equals $C$ union
the optimum $\OPT=\{f_{i_1}, \dots, f_{i_p}\}$ of the original matroid.
If we run the algorithm on this modified infinite
matroid---assuming that every element, original or dummy,
belongs to $A$ with probability $1/\sqrt{3}$---and
then  remove the dummy elements from its
output, we recover the output that we would have obtained had
we used the real matroid.

We fix an element $f_{i_r}\in \OPT$ and prove~\eqref{eq:toProveZ}
for this element in the following.
To have $f_{i_j}$ defined for every integer $j$, even outside
of $\{1,\dots p\}$, we set
$i_j=j$ for $j\leq 0$, and $i_j=n-p+j$ for $j>p$.
Hence, $\OPT'=\{f_{i_j}\mid j \text{ integer}\}$.
Furthermore, to simplify the exposition and to explain
later why $1/\sqrt{3}$ was chosen to be the probability of
including elements in $A$, we denote by $q$ the probability
that an element is contained in $A$.

For every pair of natural numbers $s,t\geq 0$,
define $\calE_{s,t}$ as the event that the following occurs
simultaneously:
\begin{compactenum}[(i)]
\item $f_{i_r} \notin A$,
\item $f_{i_{r-1-s}}$ is the last element
of $\OPT'$ before $f_{i_r}$ that is in $A$, and
\item $f_{i_{r+1+t}}$ is the first element in $\OPT'$
after $f_{i_r}$ that is in $A$.
\end{compactenum}
In other word, $\calE_{s,t}$ is the event that
$f_{i_{r-(s+1)}}\in A$; $f_{i_{r-s}}, \dots,
f_{i_{r+t}} \not\in A$; and $f_{i_{r+(t+1)}} \in A$.

From this point on we condition on the event $\calE_{s,t}$.
Consider $\mathcal{L}'_{f_{i_r}}=%
\{L\in \mathcal{L}'\mid f_{i_r}\in L\}$.
Since $\mathcal{L}'$ is a laminar family,
$\mathcal{L}'_{f_{i_r}}$ is a chain.
Let $L\in \mathcal{L}'_{f_i}$ be the smallest
set in $\mathcal{L}'_{f_i}$ with
$(L\cap \OPT')\setminus A\neq \emptyset$; or equivalently,
$\{f_{i_{r-(s+1)}}, f_{i_{r+(t+1)}}\}\cap L\neq\emptyset$.
We claim that
\begin{align}\label{eq:bigequation}
  \E[Z(f_{i_r})\, |\, \calE_{s,t}] \geq
     q \sum_{k=0}^\infty \frac{1}{s+t+1+k}(1-q)^{k}.
\end{align}
To prove~\eqref{eq:bigequation}, we distinguish two cases:
(a) $f_{i_{r-(s+1)}}\not\in L$ and (b) $f_{i_{r-(s+1)}}\in L$.

In the first case, let $K\geq 0$ be the random variable counting
the number of consecutive elements in $(f_{i_j})_j$ immediately
after $f_{i_{r+t+1}}$ that are not contained in $A$. In other words,
$f_{i_{r+(t+1)+K+1}}$ is the first element of $\OPT'$ after
$f_{i_{r+(t+1)}}$ that is in $A$.
Note that conditioned on $\calE_{s,t}$ and on the
variable $K$, the set $P\in \mathcal{P}(A)$ to which
$f_{i_r}$ belongs must be a subset of
$Q=\{f_{i_{r-(s+1)}+1}, \dots, f_{i_{r+t+K+2}-1}\}$.

In particular,
$Q\cap \OPT' \subseteq \{f_{i_{r-s}},\dots, f_{i_{r+t+K+1}}\}$.
Recalling that $f_{i_{r+t+1}} \in A$ and
$P(f_{i_r})\subseteq N\setminus A$, we conclude
$|P(f_{i_r})\cap \OPT'| \leq |Q\cap \OPT'|-1$, and hence
\begin{equation*}
Z(f_{i_r}) = \frac{1}{|P(f_{i_r})\cap \OPT'|}
  \geq \frac{1}{|Q\cap \OPT| - 1} = \frac{1}{t+s+1+K}\;.
\end{equation*}
Therefore,
\begin{align*}
\E[Z(f_{i_r})\, |\, \calE_{s,t}]
  &\geq \sum_{k=0}^\infty \E[Z(f_{i_r})\,
    |\, \calE_{s,t}, K=k]\cdot\Pr(K=k)\\
  &\geq \sum_{k=0}^\infty \frac{1}{t+s+1+k} q(1-q)^k,
\end{align*}
which proves~\eqref{eq:bigequation} for case~(a).
The proof of the claim in case~(b) is analogous, but in
that case we define $K\geq 0$ as the random variable counting
the number of consecutive elements in $(f_{i_j})_j$
immediately before $f_{i_{r-(s+1)}}$ that are outside $A$.

Based on~\eqref{eq:bigequation}, we can conclude the proof
of the theorem as follows. Since all events
$(\calE_{s,t})_{s,t\geq 0}$ are disjoint
and $\Pr(\calE_{s,t}) = q^2(1-q)^{s+t+1}$, we have
\begin{align}
\E[Z(f_{i_r})] &= \sum_{s=0}^\infty
  \sum_{t=0}^\infty \E[Z(f_{i_r})\,
    |\, \calE_{s,t}]\Pr(\calE_{s,t})\\
&\geq q^3\sum_{s=0}^\infty \sum_{t=0}^\infty\sum_{k=0}^\infty
  \frac{1}{s+t+k+1}(1-q)^{s+t+k+1} \notag\\
&= q^3\sum_{\ell = 0}^\infty
  \binom{\ell+2}{2}\frac{1}{\ell+1}(1-q)^{\ell+1}
= \frac{q^3}{2}\sum_{\ell = 0}^\infty
  (\ell +2) (1-q)^{\ell+1} \notag\\
&\overset{(\star)}{=} \frac{q^3}{2} \frac{(1-q)(1+q)}{q^2}
  = \frac{1}{2}q(1-q^2),\label{eq:finish}
\end{align}
where equality~$(\star)$ is obtained by setting
$x=1-q$ in
\begin{equation*}
\sum_{\ell = 0}^\infty (\ell + 2) x^{\ell + 1}
  = \frac{d}{dx} \left(\sum_{\ell = 0}^\infty x^{\ell + 2}\right)
  = \frac{d}{dx} \frac{x^2}{1-x} = \frac{x(2-x)}{(1-x)^2}.
\end{equation*}
Finally, $q=1/\sqrt{3}$ is chosen to maximize
$q(1-q^2)/2$ among all values in $[0,1]$, and implies
by~\eqref{eq:finish},
\begin{equation*}
\E[Z(f_{i_r})] \geq \frac{1}{3\sqrt{3}},
\end{equation*}
thus proving~\eqref{eq:toProveZ}.
\end{proof}

\section{Conclusions}

We presented a $4$-competitive algorithm for the
free order model, which is a relaxed
version of the classical matroid secretary problem.
To the best of our knowledge, this is the first 
$O(1)$-competitive algorithm of a variant
of the matroid secretary problem with adversarial weight
assignments. The central question of whether
there is a $O(1)$-competitive algorithm for the classical
matroid secretary problem remains open.

Furthermore, a new approach to design
$O(1)$-competitive algorithms for the classical version of
the matroid secretary problem restricted to laminar
matroids was presented. For this special case,
a $16000/3$-competitive algorithm has
been found only very recently, using a rather 
involved method and analysis. Whereas relatively 
elegant and simple $O(1)$-competitive procedures have been
known for a variety of special cases 
of the matroid secretary problem, the  
$O(1)$-competitive algorithm for the laminar case was 
one of the most sophisticated procedures.
Our approach leads to simpler procedures with
considerably better competitiveness.
Furthermore, contrary to the previous approach for laminar
matroid~\cite{im_2011_secretary} and the very recent 9.6-competitive
algorithm~\cite{ma_2013_simulatedSTACS}, our algorithm is
order-oblivious,
and therefore implies a constant-competitive algorithm for
single-sample prophet inequalities as
shown in~\cite{azar_2014_prophet}.
A straightforward application of our high-level idea
already leads to a competitiveness of $27e/2$.
Additionally, we presented an improved version of the
algorithm and its analysis to obtain a 
$3\sqrt{3} e\approx 14.12$-competitive algorithm.

\bibliographystyle{plain}
\bibliography{lit}

\end{document}